\numberwithin{equation}{section}
\newtheorem{theorem}{Theorem}[section]
\newtheorem{prop}[theorem]{Proposition}
\newtheorem{rem}[theorem]{Remark}
\newcommand{\un}{\underline}
\newcommand{\ga}{\alpha}
\newcommand{\cB}{{\ensuremath{\mathcal B}} }
\newcommand{\cE}{{\ensuremath{\mathcal E}} }
\newcommand{\cH}{{\ensuremath{\mathcal H}} }
\newcommand{\cJ}{{\ensuremath{\mathcal J}} }
\newcommand{\cS}{{\ensuremath{\mathcal S}} }
\newfont{\indic}{bbmss12}
\def\bo#1{\hbox{{\indic 1}$_{#1}$}}
\newcommand{\hf}{{\frac{1}{2}}}
\newcommand{\E}{{\ensuremath{\mathbb E}} }
\newcommand{\N}{{\ensuremath{\mathbb N}} }
\newcommand{\bbP}{{\ensuremath{\mathbb P}} }
\newcommand{\R}{{\ensuremath{\mathbb R}} }
\newcommand{\myeqnarray}[1]{
  \begingroup
  \jot=#1pt
  \arraycolsep=2pt
  \begin{eqnarray}}
\newcommand{\eeqnarray}{\end{eqnarray}\endgroup}
\newcommand{\ba}{\begin{array}{cc}}
\newcommand{\ea}{\end{array}}
\title{Hot scatterers and tracers for the transfer of heat in collisional dynamics}
\author{Rapha\"el Lefevere}
\address{Laboratoire de Probabilit{\'e}s et Mod\`eles Al\'eatoires (CNRS U.M.R. 7599) \\ Universit{\'e} Paris 7
-- Denis Diderot, U.F.R Math\'ematiques - Case 7012, B\^atiment Chevaleret,75205 PARIS Cedex 13, FRANCE}
\email{lefevere\@@math.jussieu.fr}
\author{Lorenzo Zambotti}
\address{Laboratoire de Probabilit{\'e}s et Mod\`eles Al\'eatoires (CNRS U.M.R. 7599) \\ Universit{\'e} Paris 6
-- Pierre et Marie Curie, U.F.R. Math\'ematiques, Case 188, 4 place
Jussieu, 75252 Paris cedex 05, France }
\email{lorenzo.zambotti\@@upmc.fr}
\date{}
\begin{document}

\maketitle

\begin{abstract}We introduce stochastic models for the transport of heat in systems described by local collisional dynamics.  The dynamics consists of tracer particles moving through an array of hot scatterers describing the effect of heat baths at fixed temperatures.  Those models have the structure of Markov renewal processes.  We study their ergodic properties in details and provide a useful formula for the cumulant generating function of the time integrated energy current. We observe that out of thermal equilibrium, the generating function is not analytic.  When the set of temperatures of the scatterers is fixed by the condition that in average no energy is exchanged between the scatterers and the system, different behaviours may arise.  When the tracer particles are allowed to travel freely through the whole array of scatterers, the temperature profile is linear.  If the particles are locked in between scatterers, the temperature profile becomes nonlinear. In both cases, the thermal conductivity is interpreted as a frequency of collision between tracers and scatterers.
\end{abstract}
\section{Introduction.}
\subsection{Lattice Hamiltonian dynamics}

\subsubsection{Smooth interactions}
The study of  conduction of thermal energy by Hamiltonian lattice dynamics has recently known a great deal of activity and numerical as well as analytical results have accumulated. For one-dimensional systems, general lattice Hamiltonian dynamics may be defined in the following way.  Consider $N$ particles  of unit mass  located on a one-dimensional
lattice with local positions and momenta
$(\underline{\mathbf{q}}, \underline{\mathbf{p}}) \equiv
\big\{(\mathbf{q}_i, \mathbf{p}_i)\big\}_{1\leq i\leq N}$, with
$\mathbf{q}_i, \mathbf{p}_i \in \mathbb{R}^d$. The Hamiltonian $H$ takes
the form,
\begin{equation}
H(\underline{\mathbf{p}}, \underline{\mathbf{q}})
= \sum_{i=1}^N \left[\frac{\mathbf{p}_i^2}{2} +V(\mathbf{q}_i)+
U(\mathbf{q}_{i}-\mathbf{q}_{i+1}) \right],
\label{Hamilton}
\end{equation}
where $V$ represents the interaction with an external substrate and $U$
a nearest-neighbor interaction.  Typically, the pinning potentials $V$ and $U$ are smooth and confining, i.e. they grow to infinity when the norm of their argument goes to infinity.  Much effort has been devoted to the study of those systems in the case where the dynamics is a small perturbation of a completely integrable one, namely the case of chains of weakly anharmonic oscillators.  For particles moving in a one-dimensional space, those are described for instance by potentials of the form,
$$
U(x)=\hf\omega^2x^2\,,\,\quad \,V(x)=\hf\nu^2 x^2+\frac{1}{4}\lambda x^4,
$$
with $\lambda$ small .
 The situation may be summarized as follows \cite{Aoki,BricmontKupiainen,LS2,Spohnphonon},  the dynamics is properly described by a Peierls-Boltzmann equation for phonons and under the assumption that this equation holds, the conductivity is finite and may be computed as a function of temperature and microscopic interactions. The temperature dependence is of the form $1/T^2$.

\subsubsection{Collisional dynamics}
Another  type of lattice Hamiltonian dynamics arises in systems where the local dynamics is given by a billiard dynamics \cite{bunilive,GG,GL,Prosen}.  Physically, dynamics of this type can model {\it aerogels}, namely gels from which one has removed the liquid components and replaced them by molecules of gas.
 For instance, one may take in (\ref{Hamilton}) particles moving in one dimension, i.e $d=1$, and take a sequence of interactions $V_k$ and $U_k$,
$$
V_k(x)=f_k\left(\frac{x}{b}\right)\,,\;\;U_k(x)=f_k\left(\frac{x}{a}\right)\,,\;\;
f_k(x)=\frac{x^{2k}}{2k}.
$$
In the limit $k\to\infty$, one obtains,
\begin{eqnarray}
\label{eq: well}
V_{\infty}(x)=\left\{
\begin{array}{l}
  +\infty\; {\rm if}\; |x|>b\\
  0\; {\rm if }\; |x|\leq b
\end{array}
\right. \qquad U_{\infty}(x)=\left\{
\begin{array}{l}
  +\infty\; {\rm if}\; |x|>a\\
  0\; {\rm if }\; |x|\leq a
\end{array}
\right.
\end{eqnarray}
and the dynamics is described by a sequence of ``collisions" between nearest neighbors. When the difference of positions of two neighboring particles reach the parameter $a$, they exchange their velocities and thus their kinetic energies.  This model, originally introduced in \cite{Prosen}, was dubbed in \cite{GL} the {\it complete exchange} model.  One may also simply consider particles moving in square cells located on a one-dimensional lattice.  While remaining confined at all times, the particles collide with their nearest-neighbors through holes in the cells walls. The first type of such models, in which the local dynamics is described by semi-dispersing billiards was introduced in \cite{bunilive} and its thermal transport properties were studied in \cite{GG}.  General collisional models of this type have been introduced in \cite{GL} and may be described as follows.
Formally, the dynamics is described by a Hamiltonian of the form (\ref{Hamilton})  where interaction potentials is equal to zero inside a region $\Omega_U\subset\mathbb{R}^d$ with
smooth boundary $\Lambda$ of dimension $d-1$, and equal to infinity outside.
Likewise, the pinning potential $V$ is assumed to be zero inside a
bounded region $\Omega_V$ and infinity outside, implying that the motion of
a single particle remains confined for all times. The regions $\Omega_U$ and
$\Omega_V$ being specified, the dynamics is equivalent to a billiard in
high dimension.

\subsubsection{Physical observables}
Dynamics described by smooth interactions and collisional dynamics have several qualitative differences and similarities that we briefly point out.
In both cases the interactions between components occur only between nearest neighbors on a lattice and thus the evolution of the local energy may be written as,
\begin{equation}\label{evolen}
E_n(t)-E_n(0)=J_{n-1\to n}([0,t])-J_{n\to n+1}([0,t]).
\end{equation}
In the case of smooth interaction potentials, the {\it time integrated energy current} between sites $n$ and $n+1$ takes the form
$$
J_{n\to n+1}([0,t])=\int_0^t\hf (\mathbf{p}_n(s) + \mathbf{p}_{n+1}(s))\cdot\nabla
U(\mathbf{q}_n(s) - \mathbf{q}_{n+1}(s)) ds
$$
whereas in the case of collisional dynamics,
\begin{eqnarray}
  J_{n\to n+1}([0,t])&=& \hf
  \sum_{0\leq k\leq N_t} \left[ p_{n}^\bot(S^k_n)^2 -
    p^\bot_{n+1}(S^k_n)^2 \right]\label{timeint},
 \end{eqnarray}
where the component of the vector $\mathbf{p}_n$ in the direction of the unit vector
$\widehat{\mathbf{n}}=||q_i-q_{i+1}||^{-1}(q_i-q_{i+1})$ at the time of collision is denoted by $p^\bot_{n}=
\mathbf{p}_n\cdot \widehat{\mathbf{n}}$.
 $N_t$ counts the number of collision up to time $t$ and $(S^k_n)_k$ is the sequence of collision times. Note that in the complete exchange model, $p_n^\bot=p_n$ and the time integrated current between two neighbors is simply the sum of all kinetic energy exchanges between  the particles.
   Assume now that such systems  are thermalized at different temperatures at their boundaries.  In order to understand the transfer of energy from one side to the other, one is interested in the ergodic behavior of the current and in computing
$
\lim_{t\to +\infty}t^{-1}  J_{n\to n+1}([0,t])
$
which gives the average current of energy in the stationary state.  Because of the special form of the time-integrated current (\ref{timeint}), a natural guess to make is that local equilibrium settles in and to assume that this limit is given by
\begin{equation}
\lim_{t\to +\infty}\frac{1}{t}  J_{n\to n+1}([0,t])=\nu \, (T_n-T_{n+1})
\label{Fourier}
\end{equation}
where $\nu=\lim_{t\to\infty} t^{-1}N_t$ is the frequency of collisions between neighbors under local equilibrium conditions.  $T_n=\hf\langle p^2_n\rangle$ is the average kinetic energy of the particles.  The conductivity is thus identified to the frequency of collisions.  Typically, the collisions occur when the particles get near the boundaries of their cell and thus the frequency of collisions is roughly proportional to the  average number of visits to the boundaries per unit time. Because the particle travels freely within its cell, this is proportional to $\sqrt{T_n}$.  Once this relation is taken for granted, then the temperature profile may be computed because, by conservation of energy, the current must be constant throughout the system,
$$
\lim_{t\to +\infty}\frac{1}{t}  J_{(n-1)\to n}([0,t])=\lim_{t\to +\infty}\frac{1}{t}  J_{n\to n+1}([0,t]).
$$
This is equivalent to a finite difference equation for the set of temperatures of the local equilibrium distribution.
 Numerical studies show  that the identification of the conductivity with the frequency of collisions holds true to a very high degree of accuracy in a wide class of collisional dynamics, when the individual particles collide rarely.  In particular, it was shown \cite{GL} that it does not depend on the detailed chaotic properties of the local dynamics as it was originally assumed \cite{GG}.

 Although such a simple way to compute heuristically the heat conductivity does not exist in weakly anharmonic systems, there are similarities between the two dynamics considered in a weakly interacting  regime and written in proper coordinates.   In both cases, a linearized Boltzmann equation describes properly the thermal properties of the systems.   Depending on the context,
each mode or particle behaves as if it was coupled to an ideal stochastic heat bath and the intensity of the coupling depends on the microscopic interactions and yields the thermal conductivity. This quantity is also identified with the frequency of collisions between the components (phonons or particles) in equilibrium.  More precisely in the case of collisional dynamics, each particle moves freely within its own cell and interact with its neighbors as if those were part of infinite thermal bath with fixed temperature.
The Boltzmann type approach is successful in computing theoretically the conductivity from microscopic interactions and local temperature as was checked in numerical simulations \cite{Aoki,GL}.

\subsection{Models and results.}
An important feature of the collisional models is that the evolution of energy occurs at discrete (collision) times  and amounts to an exchange of kinetic energy between neighbors.  The length of the interval of time between two successive collisions depends itself on the kinetic energy of the particle, which fixes the $\sqrt{T}$ dependance of the collision frequency. Our idea is to build and analyze models which are stochastic from the start and share the general structure described by the Boltzmann description sketched above.

Thus, we want to consider dynamics which consists of a mixture of integrable Hamiltonian dynamics and collisions with stochastic heat baths.  The models are made of scatterers described as heat baths and tracer particles transferring energy between those ``hot" scatterers.  The tracers move in a one-dimensional interval in which the scatterers are located on a lattice.  The motion of a tracer is ballistic except when it encounters a scatterer.  At that point its velocity is randomly updated according to a law which depends on the temperature of the scatterer.
The temperature of the scatterers is fixed by the condition that in the stationary state, no energy is exchanged between the scatterers and the tracers.  In spirit,  this is similar to the so-called self-consistent chain of (an)harmonic oscillators \cite{BRV,Bonetto,Bonettoolla}, but in our case it may be naturally interpreted as a condition ensuring that the energy transfer per unit time between scatterers is constant throughout the system.

\subsubsection{Relation to previous works}
Geometrically, the systems we study are analogous to the ones introduced and studied in \cite{EckmannYoung, EckmannYoung2,Mejia}.  The scatterers of our models are similar to the energy storing devices of those models. However, in our case, the dynamics is stochastic from the start and the action of the scatterer models the one  of  a very large system.  As in those models we distinguish two types of dynamics.  Depending on whether the particles are confined or not between the scatterers, we derive a non-linear or linear profile of temperatures for the scatterers.  Those two types of behaviours described by {\it wandering} or {\it confined} tracers seem to be universal in systems described by a collisional dynamics at a microscopic level \cite{EckmannYoung, EckmannYoung2,Mejia,GG,GL,Prosen}.  Among the collisional dynamics described above, two typical examples are given in \cite{Prosen} and \cite{GG,GL}.  In \cite{Prosen}, a detailed numerical analysis of the complete exchange model is provided.  Fourier's law holds and the temperature profile is linear.  In our framework this may be understood as an instance of wandering tracer dynamics.  Indeed, in the complete exchange dynamics, it is not only the energy that is exchanged between neighbors but also the momenta of the particles.  Thus, the dynamics is more similar to the one of particles traveling through the whole system.  In \cite{GG,GL}, models where the particles are confined and exchange only a fraction of their energy are considered. They display a temperature profile identical to the ones of the confined tracers.

\subsubsection{Organisation of the paper}
As stochastic processes, our models are naturally described by {\it Markov renewal processes} and
in section \ref{basicprocess} we consider the simplest dynamics of the type we want to study. We define the stochastic process generated by the free motion of a particle  in a box.  The particle collides with the walls of the box and its velocity is randomly updated according to some fixed probability law. We compute the unique invariant measure for this process. This justifies rigorously the updating rule used in computer simulations \cite{thermal}. In section \ref{tracers}, we define the tracers and scatterers models in full generality and provide an explicit formula for the stationary measure out of equilibrium.  In section \ref{wander}, we describe and analyze the properties of the {\it wandering tracer} model in its simplest version, i.e. when a tracer encounters a scatterer it is deterministically transmitted on the other side of the scatter.  We find that when the temperatures are fixed and the temperatures in the bulk chosen such that the transfer of energy is constant throughout the system, then the temperature profile of the scatterers is linear.  The identity between thermal conductivity and frequency of collisions between tracers and scatterers appears as a natural consequence of the renewal theorem for Markov renewal processes. Next, we study the cumulant generating function of the time-integrated current of energy. We give a rather explicit formula allowing to compute derivatives of any order.  A striking feature is the lack of analyticity of the the generating function.  The origin of this phenomenon may traced back to the presence of particles with arbitrarily low speed. Nevertheless, we are able to show the validity of the Green-Kubo formula for the conductivity. The final section \ref{confined} is devoted to the analysis of the dynamics of confined tracers.   In that case, we find that the condition that there are no exchange of energy between the scatterers and the particles imposes a non-linear profile of temperature.

\section{Basic process.}\label{basicprocess}

We first describe in detail a model which will be the elementary building block of the models we intend to study.
We consider a single particle moving  in the interval $[0,1[$ with a positive velocity. When
the particle reaches $1$, it is absorbed and re-emitted in $0$ with a random positive velocity.

To be more precise, we consider an i.i.d. sequence
$(v_i)_{i=1,2,\ldots}$ such that $v_i>0$ a.s. for all $i$. The particle
is re-emitted from $0$ with speed $v_i$ after its $i$-th collision with the wall
at $1$. The time to reach $1$ again is then $\tau_i:=1/v_i$.

Suppose the particle starts at time $t=0$ at position
$q_0\in[0,1[$ with velocity $p_0>0$. The time of the first collision
with the wall at $1$ is
\[
S_0=S_0(q_0,p_0):=\frac{1-q_0}{p_0},
\]
and the time of the $n$-th collision is
\[
S_n:=S_0+\tau_1+\cdots+\tau_n, \qquad n\geq 1.
\]
We define now a stochastic process $(q_t,p_t)_{t\geq 0}$
with values in $[0,1[\times \R_+$
\[
(q_t,p_t) = F(q,p,t,(\tau_n)_{n\geq 1}) :=
\left\{
\begin{array}{ll}
(q_0+p_0t,p) \qquad {\rm if} \quad  t<S_0,
\\ \\
\left( \frac{t-S_{n-1}}{\tau_n},\frac 1{\tau_n}
\right)  \quad {\rm if} \quad
S_{n-1}\leq t< S_n, \quad n\geq 1.
\end{array}
\right.
\]
We denote by $\cB_b$ the set of all bounded Borel $f:[0,1[\times \R_+\mapsto\R$
and we set
\[
P_tf(q_0,p_0) := \E(f(q_t,p_t)) = \E(f(F(q,p,t,(\tau_n)_{n\geq 1})), \qquad (q_0,p_0)\in[0,1[\times \R_+.
\]
Then it is not difficult to prove that
\begin{prop}\label{markov}
The process $(q_t,p_t)_{t\geq 0}$ is Markov and $(P_t)_{t\geq 0}$
has the semigroup property: $P_{t+s}=P_tP_s$, $t,s\geq 0$.
\end{prop}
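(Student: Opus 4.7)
The plan is to combine a pathwise cocycle identity for the deterministic map $F$ with the strong Markov property of the i.i.d.\ sequence $(\tau_n)$ at the collision-counting random index. Denote by $N_t$ the number of collisions up to time $t$, i.e.\ the unique integer such that $S_{N_t-1}\leq t<S_{N_t}$ (with $S_{-1}:=0$). The first step is to establish the purely deterministic identity
$$F\bigl(q_0,p_0,t+s,(\tau_n)_{n\geq 1}\bigr)=F\bigl(q_t,p_t,s,(\tau_{N_t+n})_{n\geq 1}\bigr),$$
where $(q_t,p_t)=F(q_0,p_0,t,(\tau_n))$. This follows from a short case analysis on whether the interval $[t,t+s]$ sits inside one free-flight interval $[S_{n-1},S_n)$ or straddles several of them; in either case the piecewise-linear definition of $F$ matches on both sides, and the freshly indexed $\tau$'s appearing on the right are exactly the ones actually used on the left after time $t$.

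Once the cocycle identity is in hand, the Markov property reduces to checking that the shifted sequence $(\tau_{N_t+n})_{n\geq 1}$ is independent of $\mathcal{F}_t:=\sigma(q_u,p_u:u\leq t)$ with the same joint law as $(\tau_n)_{n\geq 1}$. Since $N_t$ is a stopping time for the natural filtration of $(\tau_n)$ and $(q_0,p_0)$ is deterministic, the independence and equidistribution of the tail follow from the elementary strong Markov property of an i.i.d.\ sequence sampled at a discrete stopping time. Combining this with the cocycle identity gives, for any bounded Borel $f$,
$$\mathbb{E}\bigl[f(q_{t+s},p_{t+s})\,\big|\,\mathcal{F}_t\bigr]=\mathbb{E}\bigl[f\bigl(F(q_t,p_t,s,(\tau_{N_t+n}))\bigr)\,\big|\,\mathcal{F}_t\bigr]=P_sf(q_t,p_t),$$
which is exactly the Markov property. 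Taking an additional expectation immediately yields $P_{t+s}f(q_0,p_0)=P_t(P_sf)(q_0,p_0)$, i.e.\ the semigroup identity.

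The main technical obstacle is the bookkeeping around the random index $N_t$: one has to verify that $\mathcal{F}_t$ is generated (up to null sets) by $(q_0,p_0,N_t,\tau_1,\ldots,\tau_{N_t})$, so that conditioning on $\mathcal{F}_t$ is the same as conditioning on this finite-dimensional data, and that $\{N_t=k\}=\{S_{k-1}\leq t<S_k\}$ is measurable with respect to $\sigma(\tau_1,\ldots,\tau_k)$ so that $N_t$ is a genuine stopping time of the driving sequence. This is essentially a measure-theoretic verification stemming from the random number of collisions; the cocycle identity itself, once one commits to writing it out, is almost immediate from the explicit piecewise-linear formula for $F$.
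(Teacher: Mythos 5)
Your argument is correct, and it fills in precisely the proof the paper omits (the authors only remark that the statement ``is not difficult to prove''). The two ingredients you isolate --- the pathwise cocycle identity $F(q_0,p_0,t+s,(\tau_n))=F(q_t,p_t,s,(\tau_{N_t+n}))$, and the fact that $N_t$ is a stopping time of the discrete filtration $\sigma(\tau_1,\ldots,\tau_k)$ so that the shifted sequence is an independent copy of $(\tau_n)$ given $\mathcal F_t$ --- are exactly the standard route, and your bookkeeping (the trajectory up to time $t$ is equivalent to the data $(q_0,p_0,N_t,\tau_1,\ldots,\tau_{N_t})$, and $\{N_t\le k\}=\{S_k>t\}$) is the right verification. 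The only point worth a passing remark is that $N_t<\infty$ a.s.\ for all $t$, which holds because the $\tau_i$ are i.i.d.\ and strictly positive, so $\sum_n\tau_n=\infty$ a.s.\ and the piecewise definition of $F$ covers all times.
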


\subsection{The invariant measure}

We assume now that
\[
\mu:=\E(\tau_i)=\E(1/v_i)<+\infty,
\]
and that the distribution of $\tau_i$ is {\it non-lattice}, i.e.
there is no $\delta\geq 0$ such that $\bbP(\tau_i\in\delta\N)=1$.
This assumption is not necessary but it simplifies the presentation;
for the applications we have in mind, the distribution of $\tau_i$ has
a density and is therefore always non-lattice.

We denote
the law of $\tau_i$ by $\psi(d\tau)$ and the law of $v_i=1/\tau_i$
by $\phi(du)$.  The law of $\tau_1+\cdots+\tau_n$ is denoted
as usual by the $n$-fold convolution $\psi^{n*}$.
\begin{prop}\label{invmeas}
The only invariant measure on $[0,1[\times\R_+$ of the process $(q_t,p_t, t\geq 0)$ is given by
$\gamma(dq,dp)= dq \, \phi(dp)/\mu p$.
\end{prop}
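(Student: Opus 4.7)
The plan is to exploit the regenerative structure of $(q_t,p_t)$. After each hitting time $S_{n-1}$ of the wall at $1$, the particle is re-emitted from $0$ with a fresh i.i.d.\ velocity $v_n\sim\phi$, so by Proposition \ref{markov} the process between successive hits consists of i.i.d.\ cycles of length $\tau_n=1/v_n$ with $\E(\tau_n)=\mu<\infty$. Inside such a cycle (started at $q=0$, $p=v$) the state at time $s\in[0,1/v]$ is simply $(vs,v)$.

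First I would verify that $\gamma$ is a probability measure:
$$\gamma([0,1[\times\R_+)=\int_0^1 dq\int\frac{\phi(dp)}{\mu p}=\frac{1}{\mu}\,\E(1/v_1)=1.$$
Then I would apply the renewal--reward (Palm inversion) formula for a positive-recurrent regenerative process, which characterises the unique invariant probability measure as the time-average over one cycle, normalised by the mean cycle length:
$$\int f\,d\nu\;=\;\frac{1}{\mu}\int\phi(dv)\int_0^{1/v}f(vs,v)\,ds.$$
Substituting $q=vs$ (so $ds=dq/v$) and using Fubini give
$$\int f\,d\nu=\frac{1}{\mu}\int\phi(dv)\int_0^1 f(q,v)\,\frac{dq}{v}=\int_0^1 dq\int f(q,p)\,\frac{\phi(dp)}{\mu p}=\int f\,d\gamma,$$
so $\nu=\gamma$, which simultaneously yields existence and uniqueness.

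The main technical point is to justify the Palm inversion in our particular Markov renewal setting; the non-lattice hypothesis on $\psi$ together with $\mu<\infty$ are precisely the ingredients under which the classical statement applies. A more self-contained route to invariance is to verify $\int\cL f\,d\gamma=0$ on a core of test functions satisfying the re-emission boundary condition $f(1^-,p)=\int f(0,v)\,\phi(dv)$. Since $\cL f(q,p)=p\,\partial_q f(q,p)$ on $[0,1[\times\R_+$,
$$\int\cL f\,d\gamma=\frac{1}{\mu}\int\phi(dp)\int_0^1\partial_q f(q,p)\,dq=\frac{1}{\mu}\int\phi(dp)\bigl[f(1^-,p)-f(0,p)\bigr],$$
and the two integrals cancel by the boundary condition. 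Uniqueness then follows because the jump chain of post-collision velocities is i.i.d.\ $\phi$, which pins down $\phi$ as its unique invariant law and hence $\gamma$ as the unique invariant law of $(q_t,p_t)$ via the renewal--reward formula above.
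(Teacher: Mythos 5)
Your argument is correct, and it reaches the stated formula by a genuinely different route from the paper. The paper computes $P_tf(q_0,p_0)$ explicitly in terms of the renewal measure $U(ds)$, applies Blackwell's renewal theorem to show that $P_tf(q_0,p_0)\to\int f\,d\gamma$ for \emph{every} initial condition, and then reads off both invariance (via the semigroup identity $\int P_tf\,d\gamma=\lim_s P_sP_tf=\lim_s P_{t+s}f=\int f\,d\gamma$) and uniqueness from this convergence. You instead invoke the cycle (Palm inversion / renewal--reward) formula for a positive-recurrent regenerative process, which identifies the stationary law directly as the normalised time-average over one i.i.d.\ cycle; the resulting integral
\[
\frac1\mu\int\phi(dv)\int_0^{1/v}f(vs,v)\,ds=\int_0^1dq\int f(q,p)\,\frac{\phi(dp)}{\mu p}
\]
is exactly the limit the paper obtains from Blackwell's theorem, so the two computations coincide at the end. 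What each approach buys: the paper's route is longer but yields the stronger ergodic statement $P_tf\to\int f\,d\gamma$ pointwise in the initial condition, from which uniqueness is immediate; yours is shorter and more structural, but shifts the burden onto the Palm inversion theorem (whose applicability you correctly flag as the main technical point) and onto the slightly informal final step deducing uniqueness for the flow from uniqueness for the i.i.d.\ jump chain --- that correspondence between invariant laws of the flow and of the embedded chain is precisely what the paper's semigroup argument makes explicit. Your generator check $\int\cL f\,d\gamma=0$ on functions satisfying the re-emission boundary condition $f(1^-,p)=\int f(0,v)\,\phi(dv)$ is a nice independent confirmation of invariance, though to make it a proof one would have to show that such functions form a core for the generator, which is not addressed. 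Two small points: the non-lattice hypothesis is not what makes the cycle formula valid (it is needed for Blackwell-type convergence, which your route does not use), and it would be worth stating explicitly that every initial condition $(q_0,p_0)$ reaches a regeneration time in finite time, so that no invariant mass can hide outside the regenerative regime.
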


\begin{proof} For any bounded Borel function
$f:[0,1[\times\R_+\mapsto\R$
\[
\begin{split}
& P_tf(q_0,p_0)=
\\ & = \bo{(t<S_0 )} \, f(q_0+p_0t,p_0)+ \bo{(t\geq S_0)} \, \sum_{n=1}^\infty
\E\left(\bo{(S_{n-1}\leq t< S_n)} \,
f\left(\frac{t-S_{n-1}}{\tau_n}, \frac1{\tau_n}\right)\right)
\\ & = \bo{(t< S_0)} \, f(q_0+p_0t,p_0)+ \bo{(t\geq S_0)} \, \sum_{n=1}^\infty
\int_0^{t-S_0} \psi^{*(n-1)}(ds) \int_s^{+\infty} \psi(d\tau)\,
f\left(\frac{t-S_0-s}{\tau}, \frac1{\tau}\right)
\\ & = \bo{(t< S_0)} \, f(q_0+p_0t,p_0)+ \bo{(t\geq S_0)}
\int_0^{t-S_0} \, U(ds) \int_s^{+\infty} \psi(d\tau)\,
f\left(\frac{t-S_0-s}{\tau}, \frac1{\tau}\right),
\end{split}
\]
where we recall that $S_n=S_0(q,p)+\tau_1+\cdots+\tau_n$ and we set
$\psi^{*0}(ds)=\delta_0(ds)$ and
\[
\int_a^bU(ds) =\sum_{n=1}^\infty \int_a^b \psi^{*(n-1)}(ds)
= \delta_0(ds) + \sum_{n=1}^\infty \int_a^b \psi^{*n}(ds), \quad
0\leq a\leq b.
\]
The {\it renewal measure} $U(ds)$ gives the average number of
collisions in the time interval $ds$. We obtain
\begin{equation}\label{repressem}
 P_tf(q_0,p_0)= \bo{(t< S_0)} \, f(q_0+p_0t,p_0)+ \bo{(t\geq S_0)}
\int_0^{+\infty} \psi(d\tau)\, \int_0^{\tau\wedge (t-S_0)} \, U_t(ds)
\, f\left(\frac{s}{\tau}, \frac1{\tau}\right)
\end{equation}
where $U_t([a,b])=U([t-b,t-a])$ for $0\leq a\leq b\leq t$.
By Blackwell's renewal theorem (see e.g. \cite[Theorem V.4.3]{asmussen}),
$U([t-b,t-a])$ converges to
$(b-a)/\mu$ as $t\to+\infty$. Therefore,
the last expression converges to
\[
\begin{split}
&\int_0^{+\infty} \psi(d\tau) \int_0^\tau \frac{ds}\mu \,
f\left(\frac{s}{\tau}, \frac1{\tau}\right) = \int_0^1
dx \int_0^{+\infty} \frac{\tau\, \psi(d\tau)}\mu
\, f\left(x, \frac1{\tau}\right)
\\ & = \int_{[0,1[\times\R_+} f(x,u) \, \frac{\phi(du)}{\mu\, u}
\, dx =: \int_{[0,1[\times\R_+} f(x,u) \,  \gamma(dx,du).
\end{split}
\]
In other words we have for all bounded Borel $f:[0,1[\times\R_+\mapsto\R$
\[
\lim_{t\to+\infty} P_tf(q_0,p_0) = \int_{[0,1[\times\R_+} f \,  d\gamma,
\qquad \forall \ (q_0,p_0)\in[0,1[\times\R_+.
\]
Since $(P_t)_{t\geq 0}$ is a semigroup, we obtain
\[
\int f \, d\gamma = \lim_{s\to+\infty} P_sf(q_0,p_0) = \lim_{s\to+\infty} P_{t+s}f(q_0,p_0)
= \lim_{s\to+\infty} P_{s}P_tf(q_0,p_0)=\int P_tf \, d\gamma,
\]
i.e. $\gamma$ is invariant for $(P_t)_{t\geq 0}$.
 This convergence result for all initial conditions
$(q_0,p_0)\in[0,1[\times\R_+$ implies that $\gamma(dq,dp)$ is the only invariant measure
of the process.
\end{proof}

\noindent We remark that the invariant measure $\gamma$ satisfies
\begin{equation}\label{gamma}
\int f \, d\gamma = \E\left( f\left(U,\tau^{-1}\right) \, \frac\tau\mu\right),
\end{equation}
where $(U,\tau)$ is independent, $U$ is uniform on $[0,1]$ and
$\tau$ has same law as $\tau_1$.

\subsection{The main example}\label{eqdyn}

The process we study here will be used to build  dynamics where the
updating of the velocities simulate the action of a thermal bath acting
through a wall. Assume that in our set-up one wishes to obtain a process
whose  invariant measure $\gamma(dq,dp)$  is given by
\[
\gamma(dq,dp) = \bo{[0,1]}(q)\, \bo{]0,+\infty[}(p)\, \sqrt{\frac{2\beta}\pi}
\,  e^{-\beta\frac{p^2}2}\, dq\, dp,
\]
where $\beta>0$.
Then from the above results we see that we must choose the
updating distribution
\begin{equation}\label{phibeta}
\phi_\beta(p) = \bo{(p>0)} \, p\,  \sqrt{\frac{2\beta}\pi}\,
e^{-\beta\frac{p^2}2} \, \frac1{\int_{\R_+}
u\,  \sqrt{\frac{2\beta}\pi}\, e^{-\beta\frac{u^2}2}\, du} =
\bo{(p>0)} \, \beta\, p\, e^{-\beta\frac{p^2}2}.
\end{equation}
The distribution of the interarrival time becomes accordingly
\begin{equation}\label{psibeta}
\psi_\beta(d\tau) := \bo{(\tau>0)} \, \frac{\beta}{\tau^3}
\, \exp\left(-\frac\beta{2\tau^2}\right)\, d\tau,
\end{equation}
where we recall that $\phi_\beta$ and $\psi_\beta$ are related by
\[
\int f(v) \, \phi_\beta(v)\, dv =
\int f\left(\frac 1\tau\right) \, \psi_\beta(d\tau).
\]

\subsection{A renewal theory viewpoint}

We recall now some classical results of renewal theory and show
their relevance in our setting.

We consider again the i.i.d. sequence $(\tau_n)_{n\geq 1}$
that we used for the construction of the process
$(q_t,p_t)$. We recall that $\tau_n>0$ a.s., $\E(\tau_n):=\mu<+\infty$
and the distribution of $\tau_i$ is non-lattice.
A (delayed) renewal process associated with $(\tau_n)_{n\geq 1}$
is a sequence
\[
S_0\geq 0, \qquad S_n:=S_{n-1}+\tau_n, \quad n\geq 1,
\]
where $S_0$ is independent of $(\tau_n)_{n\geq 1}$. We define
\[
n_t := \sum_{n=0}^\infty \bo{(S_n\leq t)} =
\#\{n=0,1,\ldots: S_n\leq t\} = \inf\{n: S_n>t\}.
\]
We assume for simplicity that $S_0>0$ and $S_0$ is non-random.
Then we define the processes
\[
B_t:=S_{n_t}-t, \qquad t\geq 0,
\]
\[
A_t:= \left\{ \begin{array}{ll} t-S_{n_t-1}, \quad t\geq S_0, \\ \\
A_0+t, \qquad t<S_0, \end{array} \right.
\]
where $A_0\geq 0$ is non-random and $B_0=S_0>0$ by assumption.
Then it is easy to see that in our setting, for $(q_0,p_0)\in[0,1[\times]0,+\infty[$
\[
A_0=\frac{q_0}{p_0}, \quad B_0=S_0=S_0(q_0,p_0) \, \Longrightarrow \, q_t = \frac{A_t}{A_t+B_t},
\quad p_t = \frac1{A_t+B_t}, \quad \forall \ t\geq 0.
\]
This follows, in particular, from
\[
\frac1{A_t+B_t}=\left\{ \begin{array}{ll} p_0, \qquad \forall \ t< S_0, \\ \\
\tau_{n_t}, \quad \forall \ t\geq S_0, \end{array} \right.
\]
Then we recall the following result (see e.g. \cite[Chapter V]{asmussen})
\begin{prop}\label{renewal}
The processes $(A_t)_{t\geq 0}$, $(B_t)_{t\geq 0}$ and $(A_t,B_t)_{t\geq 0}$ are strong Markov.
Any invariant probability measure of $(A_t)_{t\geq 0}$ is also invariant for $(B_t)_{t\geq 0}$
and must coincide with $x\, \bbP(\tau_1\in dx)/\mu$. The only invariant
probability measure $\nu$ for the process  $(A_t,B_t)_{t\geq 0}$ is given by
\[
\int f\, d\nu = \E\left( f(U\tau,(1-U)\tau) \, \frac\tau\mu\right),
\]
where $(U,\tau)$ is independent, $U$ is uniform on $[0,1]$ and
$\tau$ has same law as $\tau_1$.
\end{prop}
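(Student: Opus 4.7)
The plan is to prove the three assertions in turn, reusing the Blackwell renewal theorem machinery already employed in the proof of Proposition \ref{invmeas} together with the explicit bijection $(q_t,p_t) = (A_t/(A_t+B_t), 1/(A_t+B_t))$ recorded just above.

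For the strong Markov property, the key observation is that at a stopping time $T$, the pair $(A_T,B_T)$ encodes all the information needed from the inter-arrival time straddling $T$, while the subsequent inter-arrivals $\tau_{n_T+1},\tau_{n_T+2},\ldots$ are independent of $\cF_T$ by the i.i.d.\ structure of $(\tau_n)_{n\geq 1}$ and the fact that $\{n_T = k\}$ depends only on $\tau_1,\ldots,\tau_k$. Conditionally on $B_T=b$, the next renewal occurs at time $T+b$ and the post-$T$ process $(A_{T+s},B_{T+s})_{s\geq 0}$ has the same law as $(A_s,B_s)_{s\geq 0}$ started from $(A_0,B_0)=(0,b)$; projecting on either coordinate then yields the Markov property of $A_t$ and $B_t$ individually.

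For the marginal invariant laws, I would write a representation analogous to \eqref{repressem} for $B_t$, namely
\[
\E(g(B_t)) = \bo{(t< S_0)}\, g(S_0-t) + \bo{(t\geq S_0)} \int_0^{t-S_0} U(ds) \int_s^{+\infty} \psi(d\tau)\, g(\tau - s),
\]
and then invoke Blackwell's theorem exactly as in Proposition \ref{invmeas} to obtain
\[
\lim_{t\to+\infty} \E(g(B_t)) = \int_0^{+\infty} \psi(d\tau) \int_0^\tau \frac{ds}{\mu}\, g(\tau - s) = \int_0^{+\infty} g(x)\, \frac{x}{\mu}\, \psi(dx),
\]
identifying $x\,\bbP(\tau_1\in dx)/\mu$ as the unique invariant law. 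The same argument, with $g(s)$ in place of $g(\tau-s)$, yields the identical length-biased limit for $A_t$, so any probability measure invariant for $A_t$ is automatically the invariant measure for $B_t$.

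For the joint invariant measure, the cleanest route is to apply the change of variables $(q,p)=(A/(A+B), 1/(A+B))$, i.e.\ $(A,B)=(qp^{-1},(1-q)p^{-1})$, to the invariant measure $\gamma(dq,dp)=dq\,\phi(dp)/(\mu p)$ of Proposition \ref{invmeas}, which transforms it into precisely the law appearing in \eqref{gamma}-type form $\E(f(U\tau,(1-U)\tau)\tau/\mu)$. Alternatively, the same Blackwell computation performed directly on $\E(f(A_t,B_t))$ produces
\[
\lim_{t\to+\infty}\E(f(A_t,B_t)) = \int_0^{+\infty}\psi(d\tau)\int_0^\tau \frac{ds}{\mu}\, f(s,\tau-s) = \E\!\left(f(U\tau,(1-U)\tau)\,\frac{\tau}{\mu}\right),
\]
and uniqueness in each case follows because the limits above hold starting from every deterministic initial condition. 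The one genuinely delicate point I anticipate is the bookkeeping in the strong Markov step, where the random index $n_T$ must be handled carefully so as to certify the independence of $(\tau_{n_T+k})_{k\geq 1}$ from $\cF_T$; everything else is a direct instance of the renewal-theoretic arguments already used.
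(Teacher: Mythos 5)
First, a point of reference: the paper does not actually prove Proposition \ref{renewal} --- it is quoted from \cite[Chapter V]{asmussen}. So the only meaningful comparison is with the standard renewal-theoretic argument, and your route (the Blackwell computation already used for Proposition \ref{invmeas}, plus pushing the invariant law of $(q_t,p_t)$ through the bijection $(q_t,p_t)=(A_t/(A_t+B_t),1/(A_t+B_t))$) is exactly the natural one; the strong Markov sketch and the computation of the joint limit law are fine in outline.

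There is, however, one step that fails: the identity
\[
\int_0^{+\infty}\psi(d\tau)\int_0^\tau \frac{ds}{\mu}\,g(\tau-s)\;=\;\int_0^{+\infty}g(x)\,\frac{x}{\mu}\,\psi(dx)
\]
is false for general $g$. Substituting $x=\tau-s$ and using Fubini, the left-hand side equals $\frac1\mu\int_0^{+\infty}g(x)\,\bbP(\tau_1>x)\,dx$, the integrated-tail (equilibrium) distribution; the right-hand side is the length-biased law, which is the stationary law of the spread $A_t+B_t=\tau_{n_t}$, not of $B_t$ or $A_t$ separately. Your own final display exposes the inconsistency: the first marginal of $\E\bigl(f(U\tau,(1-U)\tau)\,\tau/\mu\bigr)$ is $\E\bigl(g(U\tau)\,\tau/\mu\bigr)=\frac1\mu\int g(x)\,\bbP(\tau_1>x)\,dx$, which does not coincide with $\int g(x)\,\frac{x}{\mu}\,\psi(dx)$ unless the two measures $\bbP(\tau_1>x)\,dx$ and $x\,\psi(dx)$ agree (they do not, e.g.\ for $\psi_\beta$). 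The phrase ``must coincide with $x\,\bbP(\tau_1\in dx)/\mu$'' in the Proposition is itself a misprint for the integrated-tail law; a correct proof should say so rather than force the computation to match it. A second, minor point: in your representation of $\E(g(B_t))$ the inner integral should run over $\tau>t-S_0-s$ (so that $B_t=\tau-(t-S_0-s)$), not over $\tau>s$; the stated bounds only become correct after the change of variable $s\mapsto t-S_0-s$, exactly as in the passage from the paper's intermediate display to \eqref{repressem}.
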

\noindent Since $(q_t,p_t)$ is given by an injective function computed at
$(A_t,B_t)$, then $(q_t,p_t)$ is also strong Markov; moreover, an invariant
probability measure $\gamma(dq,dp)$ for $(q_t,p_t)$ is necessarily given by
\[
\int f\, d\gamma
= \E\left( f\left(\frac{U\tau}{U\tau+(1-U)\tau},\frac1{U\tau+(1-U)\tau}\right) \, \frac\tau\mu\right)
= \E\left( f\left(U,\tau^{-1}\right) \, \frac\tau\mu\right)
\]
which is the same formula found in \eqref{gamma} above.

\subsection{Markov renewal processes}\label{mrt}

A natural generalization of a renewal process is provided by
a {\it Markov renewal process} (see \cite{asmussen}),
which is a pair of stochastic processes
$(X_n,\tau_{n+1})_{n\geq 0}$, such that
\begin{enumerate}
\item $(X_n)_{n\geq 0}$ is a Markov chain in some finite state space $E$
with transition density $(q_{ij})_{i,j\in E}$
\item conditionally on $\cH=\sigma((X_n)_{n\geq 0})$, $(\tau_{n+1})_{n\geq 0}$ is an
independent sequence of positive random variables such that for all $n\geq 1$
\[
\bbP(\tau_n\leq t \, | \, \cH) = \bbP(\tau_n\leq t \, | \, X_n,X_{n+1})
= G_{ij}(t)
\]
on the event $\{X_n=i,X_{n+1}=j\}$, where $(G_{ij})_{i,j\in E}$ is a family of
distribution functions on $]0,+\infty[$.
\end{enumerate}
The sequence of interarrival times $(\tau_{n+1})_{n\geq 0}$ is in
general not i.i.d. However, given the Markov chain $(X_n)_{n\geq 0}$,
$(\tau_{n+1})_{n\geq 0}$ is independent and the law of $\tau_{n+1}$
depends on $X_n$ and $X_{n+1}$.
This kind of process will be the main mathematical tool in the concrete models
we analyze in the following sections.

\section{Tracers and hot scatterers}\label{tracers}
\noindent We consider a gas of $M$ non-interacting tracer particles moving
through a one-dimensional lattice of scatterers $w_n$, $n=1,\ldots,N$.
In between the scatterers, the tracers move with constant speed in the
boxes $I_i$,
\begin{equation}
I_n=\left[n-1,n\right],\qquad n=1,\ldots,N
\end{equation}
When a tracer encounters the scatterer $n$, it is absorbed and re-emitted
on either side according to a certain probability distribution
with a random velocity $p$ distributed according to the law
\begin{equation}
\phi^{\pm}_{\beta_n}(p)=p^{\pm} \, \beta_n \, e^{-{\beta_n} \frac{p^2}{2}},
\end{equation}
see \eqref{phibeta}.
The state space describing the motion of the particle is thus the cartesian product of the positions space $I=[0,N]$ and
velocity space $\R^*=\R\backslash\{0\}$, $\Omega=I\times\R^*$.
At the extremities of the system
the sign of the velocity is reversed but the particle bounces back with a random velocity
distributed according  to the same law with parameter $\beta_L$ and $\beta_R$.
This dynamics defines a Markov renewal process which we describe now more formally using the same notations as above.

For notational simplicity, we define the dynamics for a single tracer particle,
the extension to $M$ particles is straightforward.
The particle moves in the interval
$I:=[0,N]$, which is split into $N$ subintervals of equal length:
$I_n:=[n-1,n]$, $n=1,\ldots,N$. At time $t=0$,
the particle starts at position $q_0\in\, ]0,N[$ with speed
$p_0\in\R^*:=\R\backslash\{0\}$ and we define $(n_0,\sigma_0)\in E$ as follows:
$n_0:=\lfloor q+({\rm sign}(p_0)+1)/2\rfloor\in\{0,1,\ldots,N\}$,
where $\lfloor\cdot\rfloor$ denotes the integer part; in other words $n_0$ is such that
\[
\left\{\begin{array}{ll}
n_0-1< q_0 \leq n_0
\qquad {\rm if} \quad {\rm sign}(p_0)=+1,
\\ n_0\leq q_0<n_0+1 \qquad {\rm if} \quad {\rm sign}(p_0)=-1.
\end{array} \right.
\]
and denotes the first scatterer the particle hits. We define moreover
\[
\sigma_0:=\left\{\begin{array}{ll}
{\rm sign}(p_0),
\qquad {\rm if} \quad n_0\in\{1,\ldots,N-1\},
\\ -{\rm sign}(p_0), \qquad {\rm if} \quad n_0\in\{0,N\}.
\end{array} \right.
\]
In other words, if the first scatterer the particle hits is at the 
boundary $\{0,N\}$ of the system, then the particle will be reflected
at the first hitting.

We suppose that the sequence $X_k=(n_k,\sigma_k)$
of scatterers visited by the particle and signs of the
velocity is a Markov chain on
\[
E:=\{(n,\sigma), \ n=1,\ldots,N-1, \ \sigma=\pm1\} \cup \{(0,+1), (N,-1)\},
\]
with initial state $(n_0,\sigma_0)$
and with an irreducible probability transition matrix on $E$
such that
\[
q_{(n,\sigma),(n',\sigma')} = \left\{ \begin{array}{ll}
1 \qquad {\rm if} \quad (n,\sigma)=(1,-1)  \ {\rm and} \  (n',\sigma')=
(0,+1)
\\
1 \qquad {\rm if} \quad (n,\sigma)=(N-1,+1)  \ {\rm and} \  (n',\sigma')=
(N,-1)
\\ 0 \qquad {\rm if} \quad n\in\{1,\ldots,N-1\} \quad {\rm and} \quad
n'-n\ne\sigma'.
\end{array} \right.
\]
The first two conditions mean that the tracer is always reflected
at $n=0$ and $n=N$.
The last condition means that the new sign $\sigma'$ gives the next
scatterer $n'$ visited:
if $\sigma'=+1$ then $n'=n+1$, if $\sigma'=-1$ then $n'=n-1$.
The irreducibility assumption gives the existence of a unique
invariant probability measure that we call $(\nu_\alpha)_{\alpha\in E}$.

We now define the time the particle takes between two
subsequent visits to the scatterers.
Conditionally on $\cH=\sigma((X_k)_{k\geq 0})$, the sequence $(\tau_{k})_{k\geq 1}$
is independent with distribution defined by
\begin{equation}\label{tau}
\bbP(\tau_k\in d\tau \, | \, \cH) = \bbP(\tau_k\in d\tau \, | \, X_{k-1})
= \frac{\beta_{n}}{\tau^3} \,
\exp\left(-\frac{\beta_{n}}{2\tau^2}\right)
\, \bo{(\tau>0)} \, d\tau =: \psi_{n}(d\tau)
\end{equation}
on the event $\{X_{k-1}=(n,\sigma)\}$, where $\beta_0,\ldots,\beta_N\in\R_+$,
see \eqref{psibeta}.

We consider now the Markov chain $(X_k)_{k\geq 0}$ with initial
state $X_0=(n,\sigma)$ and the associated sequence $(\tau_k)_{k\geq 1}$.
The time of the first collision with a wall is
\[
S_0=S_0(q_0,p_0):=\frac{n_0-q_0}{p_0}>0,
\]
and the time of the $k$-th collision with one of the scatterers is
\[
S_k:=S_0+\tau_1+\cdots+\tau_k, \qquad k\geq 1.
\]
Before time $S_0$, the particle moves with uniform velocity
$p$. Between time $S_{k-1}$ and time $S_k$, the particle
moves with uniform velocity $\frac{\sigma_k}{\tau_k}$ and
$(S_k)_{k\geq 0}$ is the sequence of times when $q_t\in
\{0,\ldots,N\}$. In particular we define the sequence of incoming velocity
$v_k$ at time $S_k$
\begin{equation}\label{v}
v_0:=p_0, \qquad v_k:=\frac{\sigma_k}{\tau_k}, \quad k\geq 1.
\end{equation}
We define precisely the stochastic process $(q_t,p_t)_{t\geq 0}$
with values in $[0,N]\times \R^*$
\begin{equation}\label{qp}
(q_t,p_t) :=
\left\{
\begin{array}{ll}
(q_0+p_0t,p_0) \qquad {\rm if} \quad  t<S_0,
\\ \\
\left( n_{k-1}
+\frac{\sigma_k}{\tau_k}(t-S_{k-1}),
\frac {\sigma_k}{\tau_k}
\right)  \ \ {\rm if} \ \
S_{k-1}\leq t< S_k, \ k\geq 1,
\end{array}
\right.
\end{equation}
and we use the notation $X_k=(n_k,\sigma_k)$, $X_{k-1}=(n_{k-1},\sigma_{k-1})$.
Then, in analogy with Propositions \ref{markov} and \ref{invmeas}
we have the result
\begin{prop}\label{markovren}
The process $(q_t,p_t)_{t\geq 0}$ is Markov and its only invariant measure
on $[0,1]\times\R^*$ is given by
\begin{equation}\label{selfmu0}
\begin{split}
\gamma(dq,dp) = \frac1{Z_N}\sum_{n=1}^N \bo{I_n}(q) \sum_{\sigma=\pm1}
& \left( \nu_{(n-1,\sigma)}\,
q_{(n-1,\sigma),(n,+1)} \, \bo{(p>0)}\, \beta_{n-1}\, e^{-\beta_{n-1}\frac{p^2}{2}} \, + \right.
\\ & \left. + \, \nu_{(n,\sigma)}\, q_{(n,\sigma),(n-1,-1)} \, \bo{(p<0)}
\, \beta_{n}\, e^{-\beta_{n}\frac{p^2}{2}}\right) dq \, dp
\end{split}
\end{equation}
where $Z_N=\sqrt{\frac\pi 2}
\sum_{(n,\sigma)\in E} \nu_{(n,\sigma)}\, \sqrt{\beta_{n}}$.
\end{prop}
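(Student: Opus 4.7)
The plan is to extend Proposition \ref{invmeas} to the present Markov renewal setting. Two ingredients are needed: the Markov property of $(q_t,p_t)_{t\ge 0}$, and a Markov renewal analogue of Blackwell's theorem giving the asymptotic occupation density in phase space.

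The Markov property follows as in Proposition \ref{markov}: between two successive hits the trajectory is deterministic in $(q_t,p_t)$, and the re-emission law $\phi^{\pm}_{\beta_n}$ together with the transition probabilities $q_{(n,\sigma),(n',\sigma')}$ depend only on the scatterer label $(n,\sigma)$, which is itself a measurable function of $(q_t,p_t)$. I would then derive an analogue of the representation \eqref{repressem}: on the event $\{X_{k-1}=(n,\sigma),\,X_k=(n',\sigma')\}$ the conditional law of $\tau_k$ is $\psi_n$, and between $S_{k-1}$ and $S_k$ one has
\[
q_t=n+\sigma'\,\frac{t-S_{k-1}}{\tau_k},\qquad p_t=\frac{\sigma'}{\tau_k},
\]
so that $P_tf(q_0,p_0)$ decomposes as $\bo{(t<S_0)}f(q_0+p_0t,p_0)$ plus a sum over $k\ge 1$ of expectations in which the only remaining randomness beyond $(X_k)$ is the interarrival time $\tau_k\sim\psi_n$.

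The main step is to invoke the Markov renewal theorem (see \cite{asmussen}): since $(X_k)$ is irreducible on the finite set $E$ with invariant law $\nu$, each $\psi_n$ in \eqref{psibeta} is non-lattice, and the mean $\mu_n:=\int\tau\,\psi_n(d\tau)=\sqrt{\pi\beta_n/2}$ is finite, the joint limiting law of $(X_{k_t-1},X_{k_t},t-S_{k_t-1},S_{k_t}-t)$, with $k_t:=\inf\{k:S_k>t\}$, is
\[
\frac{\nu_{(n,\sigma)}\,q_{(n,\sigma),(n',\sigma')}}{\mu^*}\,\bo{(a,b>0)}\,\psi_n(d(a+b))\,da,
\]
where $\mu^*=\sum_{(n,\sigma)}\nu_{(n,\sigma)}\,\mu_n$ and $(a,b)$ denote age and residual life. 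Since $(q_t,p_t)$ is a measurable function of this four-tuple via the formula displayed above, its asymptotic distribution is the corresponding pushforward. The change of variables $p=\sigma'/\tau$ turns $\tau\,\psi_n(d\tau)$ into $\beta_n e^{-\beta_n p^2/2}\,d|p|$; splitting according to $\sigma'=\pm 1$, the transition $(n-1,\sigma)\to(n,+1)$ fills $I_n$ with positive velocities weighted by $\beta_{n-1}$, while $(n,\sigma)\to(n-1,-1)$ fills $I_n$ with negative velocities weighted by $\beta_n$, reproducing \eqref{selfmu0} exactly with the normalisation $Z_N=\mu^*=\sqrt{\pi/2}\sum_{(n,\sigma)\in E}\nu_{(n,\sigma)}\sqrt{\beta_n}$.

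Invariance and uniqueness follow verbatim from the end of the proof of Proposition \ref{invmeas}: pointwise convergence $P_tf(q_0,p_0)\to\int f\,d\gamma$ combined with $P_{s+t}=P_sP_t$ forces $\int f\,d\gamma=\int P_tf\,d\gamma$, and convergence for every initial condition rules out any other invariant probability measure. The main technical obstacle I anticipate is the careful application of the Markov renewal theorem: one must verify direct Riemann integrability of the relevant test functions (which is straightforward for bounded continuous $f$ thanks to the Gaussian tails of $\psi_n$) and handle the boundary states $(0,+1)$ and $(N,-1)$, where $q_{\alpha,\cdot}$ is a deterministic reflection rather than a genuine random transition but which still contribute to $\nu$ and to $Z_N$.
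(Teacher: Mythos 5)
Your proposal is correct and takes essentially the same approach as the paper: the paper likewise writes the analogue of \eqref{repressem} in terms of the Markov renewal measures $U_{\alpha,\alpha'}$ and applies the Markov-renewal version of Blackwell's theorem, $U^t_{\alpha,\beta}([a,b])\to(b-a)\nu_\beta/\overline\mu$, which is just the Blackwell form of the joint age/residual-life limit you invoke. The change of variables $\tau\mapsto\sigma'/\tau$ and the conclusion of invariance and uniqueness via the semigroup property are carried out exactly as you describe.
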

\begin{proof}
The proof is very similar to that of Proposition \ref{invmeas}.
The renewal measures we have to consider is
\[
U_{\alpha,\alpha'}(t) = \sum_{k=1}^\infty \bbP_{\alpha}
\left(S_k\leq t, \ X_k=\alpha' \right),
\]
where under $\bbP_{\alpha}$ the process $X_k$ has the law
described above with $X_0=\alpha=(i,\sigma)$ a.s. Then the analog of formula \eqref{repressem} is
\[
\begin{split}
& P_tf(q_0,p_0) = \bo{(t< S_0)} \, f(q_0+p_0t,p_0) \\ & + \bo{(t\geq S_0)} \sum_{\alpha',\alpha''\in E}
\int_0^{+\infty} q_{\alpha',\alpha''} \,
\psi_{n'}(d\tau)\, \int_0^{\tau\wedge (t-S_0)} \, U^t_{\alpha,\alpha'}(ds)
\, f\left(n'+\frac{\sigma''s}{\tau},
\frac {\sigma''}{\tau}\right)
\end{split}
\]
%
where we use the notation $\alpha=(n,\sigma)$, $\alpha'=(n',\sigma')$, $\alpha''=(n'',\sigma'')$
and the measure $U^t_{\alpha,\alpha'}(ds)$ are defined by
\[
U^t_{\alpha,\beta}([a,b])=U_{\alpha,\beta}(t-b)-U_{\alpha,\beta}(t-a),
\quad 0\leq a\leq b\leq t, \quad \alpha,\beta\in E.
\]
Since the analog of Blackwell's Theorem holds also for
Markov renewal processes, see e.g. \cite[Theorem VII.4.3]{asmussen},
we obtain
\[
\lim_{t\to+\infty} U^t_{\alpha,\beta}([a,b]) = (b-a)\, \frac{\nu_{\beta}}{\overline\mu}
\]
where $\nu$ is the unique probability invariant measure on $E$ of the Markov
chain $(X_k)_{k\geq 0}$ and
\[
\overline\mu := \sum_{(n,\sigma)\in E} \nu_{(n,\sigma)} \, \int_{\R_+} \tau\,
\psi_{n}(d\tau) = \sqrt{\frac\pi 2}
\sum_{(n,\sigma)\in E} \nu_{(n,\sigma)}\, \sqrt{\beta_{n}}.
\]
Therefore
\[
\begin{split}
\lim_{t\to+\infty} P_tf(q_0,p_0) &= \sum_{\alpha',\alpha''\in E}
\int_0^{+\infty} q_{\alpha',\alpha''} \,\psi_{n'}(d\tau) \int_0^\tau ds\,
\frac{\nu_{\alpha'}}{\overline\mu} \,
f\left(n'+\frac{\sigma''s}{\tau},
\frac {\sigma''}{\tau}\right) \\ & = \frac1{\overline\mu}
\sum_{\alpha',\alpha''\in E} \nu_{\alpha'} \, q_{\alpha',\alpha''} \int_0^1
dx \int_0^{+\infty} \tau\, \psi_{n'}(d\tau)
\, f\left(n'+\sigma''x, \frac{\sigma''}{\tau}\right)
\\ & = \int_{[0,1]\times\R} f(q,p) \,  \gamma(dq,dp).
\end{split}
\]
Arguing as in the proof of Proposition \ref{markov}, we conclude.
\end{proof}
\begin{rem}{\rm
Notice that the invariant measure is always explicit, although in general
the process $(q_t,p_t)_{t\geq 0}$ is {\it not} reversible.
}
\end{rem}

\section{Wandering tracers}\label{wander}
\subsection{Generalities and physical observables.}
In the first model that we study, when a tracer reaches a scatterer $n\in\{1,\ldots,N-1\}$,
it is absorbed on one side and re-emitted on the other side
with a random velocity distributed according to a law determined by the temperature of the scatterers.
The sign of the velocity changes when and only when the tracer reaches the scatterers $0$ or $N$.
The transition matrix of the underlying Markov chain is
\[
q_{(n,\sigma),(n',\sigma')} = \left\{ \begin{array}{ll}
1 \qquad {\rm if} \quad n'=n+\sigma\notin\{0,N\} \ {\rm and} \  \sigma=\sigma'
\\
1 \qquad {\rm if} \quad (n,\sigma)=(1,-1)  \ {\rm and} \  (n',\sigma')=
(0,+1)
\\
1 \qquad {\rm if} \quad (n,\sigma)=(N-1,+1)  \ {\rm and} \  (n',\sigma')=
(N,-1)
\\ 0 \qquad {\rm otherwise}.
\end{array} \right.
\]
The associated
invariant probability measure given is the uniform distribution on $E$.
In fact, in this case the Markov chain moves
deterministically as follows: $X_k=(n_k,\sigma_k)$ where
\begin{equation}\label{explicit}\left\{
\begin{array}{ll}
n_k & \, = f\left( |n_0+\sigma_0k| \, {\rm mod} \, 2N \right), \qquad f(i) := N-|N-i|, \quad
i=0,\ldots,2N
\\ \\ \sigma_k & \, = \sigma_0 \, (-1)^{\left\lfloor ({n+\sigma_0(n_0-N)+N})/N\right\rfloor }, \qquad k\geq 1.
\end{array}
\right.
\end{equation}
In particular, we have the following periodicity
\begin{equation}\label{period}
(X_k)_{k\geq 0} \stackrel{d}{=} (X_{k+2N})_{k\geq 0} \quad {\rm under} \quad \bbP_{(n_0,\sigma_0)}
\end{equation}
where $\stackrel{d}{=}$ denotes equality in distribution.
Proposition \ref{markovren} becomes
\begin{prop}\label{markovren2}
The process $(q_t,p_t)_{t\geq 0}$ is Markov and its only invariant measure
on $[0,1]\times\R^*$ is given by
\begin{equation}\label{selfmu}
\gamma(dq,dp)= \frac1{Z_N}\sum_{n=1}^N \bo{I_n}(q)\,
\left(\bo{(p>0)}\, \beta_{n-1}\, e^{-\beta_{n-1}\frac{p^2}{2}}+\bo{(p<0)}
\, \beta_{n}\, e^{-\beta_{n}\frac{p^2}{2}}\right) dq \, dp
\end{equation}
where $Z_N=\sqrt{\frac\pi 2}
\sum_{n=1}^{N} \left(\sqrt{\beta_{n-1}}+\sqrt{\beta_n}\right)$.
\end{prop}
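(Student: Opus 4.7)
The strategy is to specialize Proposition \ref{markovren} to the deterministic transition matrix defining the wandering model. The Markov property of $(q_t, p_t)_{t \geq 0}$ follows at once from the general construction in Section \ref{tracers}, so the substance of the proof is the identification of the unique invariant probability measure $\nu$ of the embedded chain $(X_k)_{k \geq 0}$ on $E$ together with the simplification of the general density \eqref{selfmu0}.

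For the first task I claim that $\nu_\alpha = 1/(2N)$ for every $\alpha \in E$. Since each row of $(q_{\alpha, \beta})$ contains a unique entry equal to $1$, the chain is deterministic, and a short case inspection verifies that each state also has exactly one predecessor: the bulk transitions $(n-1, +1) \to (n, +1)$ and $(n+1, -1) \to (n, -1)$ account for most of $E$, while the boundary transitions $(0, +1) \to (1, +1)$, $(1, -1) \to (0, +1)$, $(N-1, +1) \to (N, -1)$, and $(N, -1) \to (N-1, -1)$ cover the remaining states. Consequently the uniform probability measure on $E$ solves the balance equations $\sum_\alpha \nu_\alpha q_{\alpha, \beta} = \nu_\beta$. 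The same conclusion is visible from the explicit formula \eqref{explicit} together with the periodicity \eqref{period}, which show that the chain traverses each element of $E$ exactly once per period.

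With $\nu_\alpha \equiv 1/(2N)$ in hand I substitute into \eqref{selfmu0}. Because the transition matrix is $\{0,1\}$-valued, the sums $\sum_{\sigma = \pm 1}$ each collapse to a single nonzero term: for $q \in I_n$ and $p > 0$ the surviving contribution is the Gaussian $\beta_{n-1} e^{-\beta_{n-1} p^2/2}$, produced by the unique transition emitting a rightward-moving tracer from scatterer $n-1$; for $p < 0$ one analogously obtains $\beta_n e^{-\beta_n p^2/2}$ from scatterer $n$. The boundary intervals $I_1$ and $I_N$ are fed by the reflection transitions at $0$ and $N$ respectively, which is exactly what places the $\beta_0$ and $\beta_N$ contributions at the correct ends. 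The reindexing $\sum_{(n, \sigma) \in E} \sqrt{\beta_n} = \sqrt{\beta_0} + \sqrt{\beta_N} + 2\sum_{n=1}^{N-1}\sqrt{\beta_n} = \sum_{n=1}^N (\sqrt{\beta_{n-1}} + \sqrt{\beta_n})$ then yields the stated normalization $Z_N$, the common factor $1/(2N)$ being absorbed in the definition of $Z_N$. I expect the main bookkeeping subtlety to lie in the careful handling of the reflection transitions when applying \eqref{selfmu0}, since a naive reading can mis-place the $\beta_0$ and $\beta_N$ contributions; once these are correctly identified the remaining computation is routine.
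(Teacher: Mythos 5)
Your proposal is correct and follows exactly the route the paper intends: Proposition \ref{markovren2} is stated there as an immediate specialization of Proposition \ref{markovren}, using precisely the facts you verify, namely that the transition matrix is a cyclic permutation of the $2N$ states of $E$ (so $\nu_\alpha\equiv 1/(2N)$) and that the sums over $\sigma$ in \eqref{selfmu0} each reduce to a single term, with the reindexing $\sum_{(n,\sigma)\in E}\sqrt{\beta_n}=\sum_{n=1}^N(\sqrt{\beta_{n-1}}+\sqrt{\beta_n})$ giving the stated $Z_N$. Your explicit bookkeeping of the boundary reflections and the cancellation of the $1/(2N)$ factor is exactly the computation the paper leaves implicit.
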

\noindent
The first useful result is the computation of the asymptotic frequency of collision
of a tracer with a fixed scatterer.
\begin{prop}\label{N_t}
For $n\in\{0,\ldots,N\}$ set $\phi_{n,0}:=\inf\{\ell\geq 0: \,
n_\ell=n\}$,
\[
\phi_{n,k+1}:=\inf\{\ell>\phi_{n,k}: \,
n_\ell=n\}, \qquad k\geq 0
\]
and
\begin{equation}\label{defN_t}
N_t^n := \sum_{k=1}^\infty \bo{(S_{\phi_{n,k}}\leq t)}
, \qquad \hat N_t^n := \sum_{k=1}^\infty 2\, \bo{(S_{\phi_{n,2k}}\leq t)}
, \qquad t\geq 0.
\end{equation}
Then for any initial condition $(q_0,p_0)$, $\bbP_{(q_0,p_0)}$-a.s.
\begin{equation}\label{eqN_t}
\lim_{t\to+\infty} \frac{N_t^n}t = \lim_{t\to+\infty} \frac{\hat N_t^n}t
=\frac 2{Z_N}, \quad {\rm if} \quad n\in\{1,\ldots,N-1\},
\end{equation}
\begin{equation}\label{eqN_t2}
\lim_{t\to+\infty} \frac{N_t^n}t 
=\frac 1{Z_N}, \quad {\rm if} \quad n\in\{0,N\}.
\end{equation}
\end{prop}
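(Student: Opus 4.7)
The plan is to exploit the fact that, for the wandering-tracer dynamics, formulas (\ref{explicit})--(\ref{period}) make the Markov chain $(X_k)_{k\geq 0}$ deterministic given $X_0$ and $2N$-periodic; the sequence $(\tau_k)_{k\geq 1}$ is then an unconditionally independent family with $\tau_k\sim\psi_{n_{k-1}}$, so the proof reduces to a strong law of large numbers applied to i.i.d.\ blocks of $2N$ consecutive waiting times.

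First I would record the combinatorics. Within any window of $2N$ consecutive indices the deterministic chain visits every element of $E$ exactly once; in particular every internal scatterer $n\in\{1,\ldots,N-1\}$ is visited twice (via $(n,+1)$ and $(n,-1)$), while each boundary scatterer $n\in\{0,N\}$ is visited once. Hence, for any starting state and all $k\geq 1$,
\[
\phi_{n,2k}-\phi_{n,2(k-1)}=2N\ (n\in\{1,\ldots,N-1\}), \qquad \phi_{n,k}-\phi_{n,k-1}=2N\ (n\in\{0,N\}).
\]
I would then form the block sums $Y_k:=\sum_{j=\phi_{n,2(k-1)}+1}^{\phi_{n,2k}}\tau_j$ for internal $n$ (with the obvious modification for boundary $n$). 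Each block comprises $2N$ independent waiting times drawn from the full family $(\psi_m)_{(m,\sigma)\in E}$, each exactly once, so the $Y_k$ are i.i.d.\ with
\[
\E[Y_k]=\sum_{(m,\sigma)\in E}\int_0^\infty\tau\,\psi_m(d\tau)=\sqrt{\pi/2}\sum_{(m,\sigma)\in E}\sqrt{\beta_m}=Z_N,
\]
using the elementary integral $\int_0^{+\infty}\tau\,\psi_m(d\tau)=\sqrt{\pi\beta_m/2}$, which was already needed in the proof of Proposition \ref{markovren}. The classical SLLN then yields $S_{\phi_{n,2k}}/k\to Z_N$ a.s.\ for internal $n$ and $S_{\phi_{n,k}}/k\to Z_N$ a.s.\ for boundary $n$; the contributions of $S_0$ and of the finitely many $\tau_j$'s with $j\leq\phi_{n,0}$ are $O(1)$ and hence negligible after dividing by $k$.

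It remains to invert these limits into statements about the counting processes. For boundary $n$, the standard sandwich $S_{\phi_{n,N_t^n}}\leq t<S_{\phi_{n,N_t^n+1}}$ combined with $S_{\phi_{n,k}}/k\to Z_N$ yields $N_t^n/t\to 1/Z_N$. For internal $n$, the same argument applied to $\hat N_t^n$ gives $\hat N_t^n/t\to 2/Z_N$, and since $|N_t^n-\hat N_t^n|\leq 1$ by construction, the limit for $N_t^n/t$ follows immediately. There is no substantive obstacle here; the only points to check carefully are the clean block decomposition and the negligibility of the $O(1)$ transient, both of which are routine SLLN bookkeeping.
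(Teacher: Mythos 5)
Your proof is correct and follows essentially the same route as the paper: both exploit the $2N$-periodicity \eqref{period} to obtain i.i.d.\ blocks of waiting times with mean $Z_N$, apply the (elementary) renewal theorem — which you derive by hand via the SLLN and the sandwich $S_{\phi_{n,N_t^n}}\leq t<S_{\phi_{n,N_t^n+1}}$ — and pass from $\hat N_t^n$ to $N_t^n$ using $|\hat N_t^n-N_t^n|\leq 1$. The only differences are cosmetic: you spell out the block combinatorics and the boundary case $n\in\{0,N\}$ explicitly, which the paper leaves implicit.
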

\begin{proof}
Let $n\in\{1,\ldots,N-1\}$. By \eqref{period}, the sequence
$(S_{\phi_{n,2(k+1)}}-S_{\phi_{n,2k}})_{k\geq 0}$ is i.i.d.
and therefore we have by the renewal theorem
\[
\lim_{t\to+\infty} \frac{\hat N_t^n}t = \frac2{\E(S_{\phi_{n,2}}-S_{\phi_{n,0}})}
= \sqrt\frac2\pi \,\frac2{\sum_{i=1}^N (\sqrt{\beta_{i-1}}+\sqrt{\beta_{n}})}.
\]
Since $|\hat N_t^n-N_t^n|\leq 1$, we conclude.
\end{proof}

We next identify the physical quantities of interest. The energy
exchanged between the scatterer $n$ and a particle
during a time interval $[0,t]$ is given by
\[
E_{n}([0,t]):=\hf\sum_{k\geq 0:\, S_k\leq t}\left(v^2_{k+1}-v^2_{k}\right)
\bo{(n_{k}=n)},
\]
recall that, by \eqref{v} and \eqref{qp}, $v_{k}$ and $v_{k+1}$ are respectively the incoming
and the outcoming velocity at time $S_k$.
The total entropy flow due to the exchange of energy between the scatterers and a particle is given by
\begin{equation}\label{entropyflow}
S_n([0,t]):= -\frac{E_{n}([0,t])}{T_n},
\qquad S([0,t]):=\sum_{n=0}^NS_n([0,t]).
\end{equation}
The energy exchanged between scatterers $n$ and $(n+1)$
during a time interval $[0,t]$ is
\[
J_{n\to n+1}([0,t]):=\hf\sum_{k\geq 1:\, S_k\leq t}v^2_{k}
\left( \bo{(n_{k-1}=n, \ \sigma_{k-1}=1)} - \bo{(n_{k-1}=n+1,\  \sigma_{k-1}=-1)} \right).
\]
\noindent We define the energy flow per unit time in the stationary state by
\begin{equation}\label{en}
\cE_n:=\lim_{t\to+\infty} \frac1t\, E_n([0,t]).
\end{equation}
Similarly, the entropy flow per unit time is given by
\begin{equation}\label{sn}
\cS_n:=\lim_{t\to+\infty} \frac1t \, S_n([0,t]),
\qquad \cS:=\sum_{n=0}^N\cS_n
\end{equation}
and the current of energy between scatterers $w_n$ and $w_{n+1}$ is given by the transfer of energy per unit time,
\begin{equation}\label{jn}
\cJ_n:=\lim_{t\to+\infty} \frac1t\, J_{n\to n+1}([0,t]).
\end{equation}
\begin{prop}\label{limit} The limits in \eqref{en}, \eqref{sn} and \eqref{jn} exist $\bbP_{(q_0,p_0)}$ a.s.
and for all $n=1,\ldots,N-1$,
\begin{equation}\label{cE}
\cE_n= \frac{2T_n-T_{n-1}-T_{n+1}}{Z_N}\quad{\rm and}\quad
\cE_0=\frac{T_0-T_1}{Z_N},\quad \cE_N=\frac{T_N-T_{N-1}}{Z_N},
\end{equation}
\begin{equation}\label{cJ}
\cJ_n= \frac{T_n-T_{n+1}}{Z_N},  \qquad \cS=\frac1{Z_N}
\sum_{n=0}^{N-1}\frac{(T_n-T_{n+1})^2}{T_nT_{n+1}}\geq 0.
\end{equation}
\end{prop}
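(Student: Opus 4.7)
The plan relies on three ingredients already in place: the deterministic periodicity \eqref{period} of the underlying Markov chain $(X_k)_{k\geq 0}$, the conditional independence and explicit laws of the interarrival times described in \eqref{tau}, and the collision-frequency formulas of Proposition \ref{N_t}. Because $X_k$ is deterministic given $(n_0,\sigma_0)$, the sequence of collisions partitions into successive cycles of length $2N$, each of which visits every interior state $(n,\pm 1)$ exactly once and the two boundary states $(0,+1),(N,-1)$ exactly once. Conditionally on $\cH$, the $\tau_k$ are independent with $\tau_k\sim\psi_{n_{k-1}}$, and a direct change of variables shows that $v_k^2=1/\tau_k^2$ is $\mathrm{Exp}(\beta_{n_{k-1}}/2)$-distributed; in particular $\E(v_k^2\mid X_{k-1})=2T_{n_{k-1}}$ with all further moments finite.

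For $\cE_n$ with $n\in\{1,\ldots,N-1\}$, I would split the sum defining $E_n([0,t])$ according to the two visit-types $X_k=(n,+1)$ and $X_k=(n,-1)$, which occur once each per cycle. When $X_k=(n,+1)$ one has $X_{k-1}=(n-1,+1)$, so $v_k^2$ is drawn (independently across cycles) from a law of mean $2T_{n-1}$ while $v_{k+1}^2$ has mean $2T_n$; when $X_k=(n,-1)$ the corresponding means are $2T_{n+1}$ and $2T_n$. Along each type the summands $\hf(v_{k+1}^2-v_k^2)$ are i.i.d., so the SLLN gives that their first $K$ partial sums are $K(T_n-T_{n\mp 1})+o(K)$ a.s.; multiplying by the asymptotic frequency of each type, namely $1/Z_N$ per unit time (which follows from $N_t^n/t\to 2/Z_N$ in Proposition \ref{N_t} together with the equinumerosity of the two types in each cycle), one obtains $\cE_n=\frac{1}{Z_N}\bigl[(T_n-T_{n-1})+(T_n-T_{n+1})\bigr]$. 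The boundary cases $n\in\{0,N\}$ are identical, with only one visit-type per cycle and $N_t^n/t\to 1/Z_N$. The same mechanism applied to the two indicator classes appearing in $J_{n\to n+1}([0,t])$, namely $\{X_{k-1}=(n,+1)\}$ (contribution with mean $2T_n$) and $\{X_{k-1}=(n+1,-1)\}$ (mean $2T_{n+1}$), each occurring once per cycle, yields $\cJ_n=(T_n-T_{n+1})/Z_N$.

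The entropy identity is then purely algebraic. From \eqref{entropyflow} one has $\cS_n=-\cE_n/T_n$, and comparing the formulas for $\cE_n$ and $\cJ_n$ gives the telescoping relation $\cE_n=\cJ_n-\cJ_{n-1}$ for $n=0,1,\ldots,N$ with the convention $\cJ_{-1}=\cJ_N=0$. Abel summation then produces
\[
\cS \, = \, -\sum_{n=0}^N \frac{\cJ_n-\cJ_{n-1}}{T_n} \, = \, \sum_{n=0}^{N-1} \cJ_n\left(\frac{1}{T_{n+1}}-\frac{1}{T_n}\right) \, = \, \frac{1}{Z_N}\sum_{n=0}^{N-1}\frac{(T_n-T_{n+1})^2}{T_nT_{n+1}},
\]
which is manifestly non-negative.

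The main obstacle I anticipate is not in the algebra but in the transfer from the visit-indexed SLLN to the time-indexed convergence of $E_n([0,t])/t$, $J_{n\to n+1}([0,t])/t$, and $S([0,t])/t$: this requires absorbing the initial transient $[0,S_0]$ and the last incomplete cycle in $[0,t]$ into negligible remainders. Both are bounded by partial cycle sums with finite mean $\sum_{k=1}^{2N}\E(\tau_k)=Z_N$ (this identity being also the reason $Z_N$ is the natural normalization in Proposition \ref{N_t}), and are $o(t)$ a.s. once divided by $t$. This is the only place where the almost-sure, as opposed to in-expectation, character of the convergence genuinely has to be verified.
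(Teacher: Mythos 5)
Your argument is correct and follows essentially the same route as the paper: exploit the $2N$-periodicity \eqref{period} to extract i.i.d.\ structure, apply the strong law along the visits to scatterer $n$, convert to a time average via the collision frequencies of Proposition \ref{N_t}, control the single leftover term of the last incomplete cycle, and obtain the entropy identity by summation by parts from $\cE_n=\cJ_n-\cJ_{n-1}$. The only cosmetic difference is that you split the visits to scatterer $n$ by incoming direction into two i.i.d.\ subsequences of frequency $1/Z_N$ each, whereas the paper pairs the two visits within each cycle into a single i.i.d.\ sequence of frequency $1/Z_N$ via $\hat N_t^n$; the two bookkeepings are equivalent.
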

\begin{proof}
Setting $Y_k:=(X_{i+2Nk},i=0,\ldots,2N-1)\in E^{2N}$ then
by \eqref{period} and the Markov property we have that
$(Y_k)_{k\geq 0}$ forms an i.i.d. sequence.
Using the notation \eqref{defN_t} we can write
\[
E_{n}([0,t]) =\hf \sum_{k\geq 0:\, S_k\leq t}\left(v^2_{k+1}-v^2_{k}\right)
\bo{(n_k=n)} =\hf \sum_{k=0}^{N^n_t} \left(v^2_{\phi_{n,k}+1}-v^2_{\phi_{n,k}}\right)
\]
and define
\[
e_k^n := \hf \left(v^2_{\phi_{n,k}+1}-v^2_{\phi_{n,k}}\right), \quad
\hat E_{n}([0,t]) =\sum_{k=0}^{\hat N^n_t/2} \left(e_{2k+1}^n+e_{2k}^n\right).
\]
Notice that $e_k^n$ is the exchange of energy between the scatterer $n$
and the particle at the $(k+1)$th passage of the particle by the scatterer $n$.
Since $(Y_k)_{k\geq 0}$ is i.i.d., then
$(e_{2k+1}^n+e_{2k}^n)_{k\geq 0}$ is also i.i.d. and we obtain by \eqref{eqN_t}
and by the law of large numbers for $n\in\{1,\ldots,N-1\}$
\[
\lim_{t\to+\infty} \frac1t\, \hat E_{n}([0,t]) =
\lim_{t\to+\infty} \frac{\hat N^n_t}{2t}\, \frac2{\hat N^n_t}
\, \hat E_{n}([0,t])= \frac1{Z_N} \, \E_{(q_0,p_0)}(e_1^n+e^n_0).
\]
Now, over a period of $2N$ transitions of $(X_k)_{k\geq 0}$,
each scatterer $n\in\{1,\ldots,N-1\}$ is visited twice, once coming
from the right and once from the left. Therefore
\[
\E_{(q_0,p_0)}(e_1^n+e^n_0) = 2T_n-T_{n-1}-T_{n+1}, \qquad \forall\
n\in\{1,\ldots,N-1\}.
\]
Notice now that $|\hat E_{n}([0,t])-E_{n}([0,t])|\leq
|e^n_{N^n_t}|=:W_t$. It can be
seen that $W_t/t\to 0$ and therefore we obtain the first relation of
\eqref{cE}.

If now $n\in\{0,N\}$, by \eqref{eqN_t2}
\[
\lim_{t\to+\infty} \frac1t\, E_{n}([0,t]) =
\lim_{t\to+\infty} \frac{N^n_t}t\, \frac1{N^n_t}
\, E_{n}([0,t])= \hf\, \frac1{Z_N} \, \E_{(q_0,p_0)}\left(e^n_0\right).
\]
In particular
\[
\hf\,\E_{(q_0,p_0)}\left(e^0_0\right) = T_0-T_{1},
\qquad \hf\,\E_{(q_0,p_0)}\left(e^N_0\right) = T_N-T_{N-1},
\]
and the proof of \eqref{cE} is complete.
The proof of the first relation in \eqref{cJ} is similar. The second relation follows  from
\eqref{cE} and a summation by parts:
$$
Z_N \, \cS =-
\sum_{n=0}^{N-1}(T_n-T_{n+1})\left(\frac{1}{T_{n}}-\frac{1}{T_{n+1}}\right)
=\sum_{n=0}^{N-1}\frac{(T_n-T_{n+1})^2}{T_nT_{n+1}}.
$$
\end{proof}

\subsection{Self-consistency, Fourier's law and temperature profile.}
\noindent
From \eqref{cE} and \eqref{cJ}, we have the obvious result
\begin{prop}{\bf (Self-consistency condition)}
The only collection  $(T_n)_{n=0,\ldots,N}$
such that
$$
 \cE_n=0,\; n=1,\ldots,N-1
$$
with $T_0=T_L$ and $T_N=T_R$ is
\begin{equation}\label{profile}
T_n=T_L+\frac{n}{N}(T_R-T_L), \qquad n=0,\ldots,N.
\end{equation}
The entropy flow per unit time $\cS$ is equal to $0$ if and only if
$T_L=T_R$.
\end{prop}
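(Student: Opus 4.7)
The plan is to read off both conclusions directly from Proposition \ref{limit}, since formulas \eqref{cE} and \eqref{cJ} already do all the heavy lifting; what remains is an elementary discrete boundary value problem and a positivity argument.

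First I would handle the temperature profile. By \eqref{cE}, for each $n\in\{1,\ldots,N-1\}$ the condition $\cE_n=0$ is equivalent to the discrete Laplace equation
\[
2T_n - T_{n-1} - T_{n+1} = 0, \qquad n=1,\ldots,N-1,
\]
i.e.\ the second difference of the sequence $(T_n)$ vanishes in the bulk. This is the standard one-dimensional discrete Dirichlet problem: the general solution of $T_{n+1}-2T_n+T_{n-1}=0$ is affine in $n$, and the two constants of integration are uniquely fixed by $T_0=T_L$ and $T_N=T_R$. Linear interpolation gives precisely \eqref{profile}. (Uniqueness can alternatively be obtained by a one-line discrete maximum principle: if $U_n:=T_n-T_L-\frac n N(T_R-T_L)$ then $U$ is discrete-harmonic with zero boundary values, so $U\equiv 0$.)

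Next I would address the entropy statement. Formula \eqref{cJ} expresses
\[
\cS = \frac{1}{Z_N}\sum_{n=0}^{N-1}\frac{(T_n-T_{n+1})^2}{T_n T_{n+1}},
\]
and each summand is non-negative since the $T_n$ are positive. Hence $\cS\geq 0$ and equality holds iff $T_n=T_{n+1}$ for every $n=0,\ldots,N-1$, that is, iff all $T_n$ coincide. On the linear profile this is equivalent to $T_L=T_R$, which is the asserted characterization.

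There is no genuine obstacle; the only point that deserves a brief mention is that the positivity of the temperatures $T_n$ (needed both to divide by $T_nT_{n+1}$ and to apply the maximum-principle argument cleanly) is built into the construction, since the updating distributions $\phi_{\beta_n}$ require $\beta_n>0$, i.e.\ $T_n>0$. Once this is observed, both statements of the proposition follow in a single paragraph.
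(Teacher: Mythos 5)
Your proof is correct and takes essentially the same route the paper intends: the paper simply labels this "the obvious result" following from \eqref{cE} and \eqref{cJ}, namely that $\cE_n=0$ reduces to the discrete Laplace equation with Dirichlet data (hence the affine profile) and that $\cS$ is a sum of non-negative terms vanishing iff all $T_n$ coincide. Your filling-in of the elementary details (discrete maximum principle, positivity of the $T_n$) is exactly what the paper leaves implicit.
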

\noindent Note that the condition on the exchange of energy  is imposed only for the scatterers.  In contrast, when $T_L\neq T_R$ the tracer will always exchange energy with the boundary walls.

\noindent Let us consider now $M_N$ (to be fixed) non-interacting tracer particles described by their momenta and positions $(\un p,\un q)=(p_i,q_i)_{1\leq i\leq M_N}$ and moving through the array of scatterers. As the motions of the tracers are independent, the generalization is straightforward. The corresponding stationary measure is given by
\begin{equation}
\gamma^{M_N}(d\un q,d\un p)=\prod_{i=1}^{M_N}\gamma(dq_i,dp_i)
\end{equation}
and the total average current between scatterers $w_n$ and $w_{n+1}$ is the
sum of the contribution of each particle in (\ref{cJ})
\begin{equation}
\cJ^{M_N}_{n} = {M_N}\, \frac{T_{n}-T_{n+1}}{Z_N}.
\label{totalaverage}
\end{equation}
The total rate of energy exchanged between the scatterer $w_n$ and the tracers in the stationary state is given by
\begin{equation}\label{totenergyrate}
\cE^{M_N}_n={M_N} \, \frac{2 T_n-T_{n-1}-T_{n+1}}{Z_N}.
\end{equation}
Thus, if the the self-consistency condition is imposed and the temperatures of the scatterers is given by (\ref{profile}), then, by (\ref{totalaverage}), one has
\begin{equation}
\cJ^{M_N}_{n}=-\frac{M_N}{N {Z_N}}\, (T_R-T_L).
\end{equation}
The local conductivity is defined as the ratio of the average current of energy to the local temperature gradient, namely,
\begin{equation}
\kappa_n\equiv\lim_{N\rightarrow\infty}\frac{\cJ^{M_N}_{n}}{T_n-T_{n+1}} =
\lim_{N\rightarrow\infty} \frac {M_N}{Z_N}.
\end{equation}
If the temperature profile is given by (\ref{profile}), then  one may compute the
explicit asymptotic behavior of ${Z_N}$ in the large $N$ limit
\begin{equation}
\lim_{N\rightarrow\infty} \frac{{Z_N}}{N}={\sqrt{2\pi} }\int_0^1 \frac{dx}{(T_L+x(T_R-T_L))^\hf}=
\frac{2\sqrt{2\pi}}{T_R^\hf+T_L^\hf }.
\end{equation}
Thus for a number of tracers $M_N=o(N)$, we have $\kappa_n=0$.  This is because when the number of scatterers increases, the proportion of time that a given tracer spends carrying energy from scatterer $w_n$ to $w_{n+1}$ goes to zero simply because the tracer must go back and forth in a larger and larger system.  However, we see that if we take as many tracer particles as scatterers, namely $M_N=N$, then Fourier's law holds, i.e the conductivity is finite.  Its value is given by
\begin{equation}\label{GK0}
\kappa_n=\frac{T_R^\hf+T_L^\hf}{2\sqrt{2\pi}}.
\end{equation}
Notice that $\kappa_n$ does not depend on $n$. In particular, if $T_L=T_R=T$
\begin{equation}\label{GK}
\kappa_n={\sqrt\frac{T}{2\pi}}.
\end{equation}

\subsection{Cumulant generating function and
the Gallavotti-Cohen symmetry relation}\label{wanderingGC}

We denote ${\cB}=(\beta_0,\ldots,\beta_N)\in\R_+^{N+1}$.
We do not need in this section to assume that $T_i:=\beta_i^{-1}$
satisfy  \eqref{profile}. Fix $n\in\{0,\ldots,N-1\}$.
We are going to compute and study the properties of the cumulant generating function,
\begin{equation}\label{freeener}
f_n(\lambda,\cB):=
\lim_{t\to+\infty} \frac1t \, \log\E\left(\exp\left(-\lambda J_n([0,t])\right)\right),
\qquad \forall \, \lambda\in\, ]-\beta_n,\beta_{n+1}[.
\end{equation}
We define $\Delta_n(\alpha):=\left( \bo{(i=n, \ \sigma=1)} -
\bo{(i=n+1,\ \sigma=-1)}\right)$  for $\alpha=(i,\sigma)\in E$.
For $\lambda\in \, ]-\beta_n,\beta_{n+1}[$ and $\epsilon\geq 0$, we define
\begin{equation}\label{calpha}
C_n(\alpha,\lambda,\epsilon):=\beta_{i}
\int_0^{+\infty} v\, e^{- \frac\epsilon v -({\beta_{i}}+\lambda\, \Delta_n(\alpha))\, \frac{v^2}2 } \, dv,
\qquad \alpha=(i,\sigma)\in E.
\end{equation}
and the function $F_n$, which is crucial in the computation of $f_n(\lambda,\cB)$,
\[
F_n(\lambda,\epsilon,\cB):= \prod_{\alpha\in E} C_n(\alpha,\lambda,\epsilon).
\]
$F_n$ will be identified with the spectral radius of some matrix.

We anticipate a striking feature of our model: for $\beta_{n}\ne\beta_{n+1}$,
the cumulant generating function $f_n(\cdot,\cB)$ is {\it not}
analytic around $\lambda=0$. In particular,
$f_n(\cdot,\cB)>0$ for $\lambda$ in a left neighborhood of $0$, while
$f_n(\cdot,\cB)=0$ in a right  neighborhood of $0$. See Remark \ref{analytic}
and section \ref{fluctu} below for further discussions.
\begin{prop}\label{mgf} If $\beta_{n}\leq \beta_{n+1}$ then
$\forall \,\lambda\in\,]-\beta_n,0[ \, \cup \,
]\beta_{n+1}-\beta_n,\beta_{n+1}[$, $f_n(\lambda,\cB)$ is given by the unique solution
$\epsilon_0>0$ to the equation
$$
F_n(\lambda,\epsilon_0,\cB)=1.
$$
If $\lambda\in \, [0,\beta_{n+1}-\beta_n]$, then $f_n(\lambda,\cB)=0$.
The function $f_n(\cdot,\cB)$ is convex and continuous
over $]-\beta_n,\beta_{n+1}[$ and satisfies the Gallavotti-Cohen symmetry relation
\begin{equation}\label{GC}
f_n(\lambda,\cB)=f_n(\beta_{n+1}-\beta_n-\lambda,\cB).
\end{equation}
\end{prop}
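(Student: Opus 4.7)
The plan is to express the Laplace transform
\[
\hat Z(\epsilon,\lambda):=\int_0^{+\infty}e^{-\epsilon t}\,\E\bigl[e^{-\lambda J_n([0,t])}\bigr]\,dt
\]
in closed form and then identify $f_n(\lambda,\cB)$ with its abscissa of convergence. The crucial structural input is the determinism and $2N$-periodicity of the underlying Markov chain $(X_k)_{k\geq 0}$ in the wandering model, see \eqref{explicit}--\eqref{period}: over any block of $2N$ consecutive jumps, every state $\alpha\in E$ is visited exactly once.

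First I would unfold $\hat Z$ along the jump times: since $J_n([0,t])$ is piecewise constant on $[S_k,S_{k+1})$ and equal to $\tfrac{1}{2}\sum_{j=1}^k\tau_j^{-2}\Delta_n(X_{j-1})$, one obtains a sum over $k$ whose $k$-th summand is, after taking the conditional expectation in the independent $(\tau_j)_{j\geq 1}$ with $\tau_j\sim\psi_{n_{j-1}}$ and changing variables $v=1/\tau$ in each factor, a product $\prod_{j=1}^k C_n(X_{j-1},\lambda,\epsilon)$ multiplied by an elementary $t$-integral. Exploiting the $2N$-periodicity of the visited states, the product over any block of length $2N$ equals $F_n(\lambda,\epsilon,\cB)$, and summing the resulting geometric-type series yields
\[
\hat Z(\epsilon,\lambda)=\frac{A(\epsilon,\lambda,\cB)}{1-F_n(\lambda,\epsilon,\cB)},\qquad\epsilon>0,
\]
with remainder $A$ finite for every $\epsilon>0$ and every $\lambda\in\,]-\beta_n,\beta_{n+1}[$. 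A direct computation shows that $\epsilon\mapsto F_n(\lambda,\epsilon,\cB)$ is $C^\infty$, strictly decreasing on $[0,+\infty[$ and tends to $0$ as $\epsilon\to+\infty$, so there exists a unique $\epsilon_0>0$ solving $F_n(\lambda,\epsilon_0,\cB)=1$ if and only if $F_n(\lambda,0,\cB)>1$.

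Identifying the abscissa of convergence of $\hat Z(\cdot,\lambda)$ with $\limsup_{t\to+\infty}t^{-1}\log Z_t=f_n(\lambda,\cB)$ (routine for the nonnegative right-continuous $Z_t:=\E[e^{-\lambda J_n([0,t])}]$) one obtains: if $F_n(\lambda,0,\cB)>1$ then $f_n(\lambda,\cB)=\epsilon_0$, and a Tauberian argument applied to the simple pole of $\hat Z$ at $\epsilon_0$ (where $\partial_\epsilon F_n<0$) gives $Z_t\sim c\,e^{\epsilon_0 t}$. If instead $F_n(\lambda,0,\cB)\leq 1$, the formula shows $\hat Z(\epsilon,\lambda)<+\infty$ for every $\epsilon>0$, so $f_n(\lambda,\cB)\leq 0$; the matching lower bound $f_n(\lambda,\cB)\geq 0$ comes from a direct polynomial estimate, since the heavy tail $\psi_i(d\tau)\sim\beta_i\tau^{-3}d\tau$ at infinity assigns probability of order $t^{-2}$ to events on which the first interarrival time is of order $t$ and no collision with the bond $(n,n+1)$ occurs, forcing $Z_t\geq c\,t^{-2}$. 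Evaluating $C_n$ at $\epsilon=0$ gives $\beta_n/(\beta_n+\lambda)$ for $\alpha=(n,+1)$, $\beta_{n+1}/(\beta_{n+1}-\lambda)$ for $\alpha=(n+1,-1)$, and $1$ for the other states, whence
\[
F_n(\lambda,0,\cB)=\frac{\beta_n\beta_{n+1}}{(\beta_n+\lambda)(\beta_{n+1}-\lambda)},
\]
and $F_n(\lambda,0,\cB)\leq 1$ is equivalent to $\lambda(\beta_{n+1}-\beta_n-\lambda)\geq 0$, i.e.\ to $\lambda\in[0,\beta_{n+1}-\beta_n]$ under the hypothesis $\beta_n\leq\beta_{n+1}$; this pins down the dichotomy of the statement.

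The Gallavotti-Cohen symmetry \eqref{GC} is obtained already at the level of $F_n$: the substitution $\lambda\mapsto\beta_{n+1}-\beta_n-\lambda$ swaps the exponents in the only two nontrivial factors $C_n((n,+1),\cdot,\epsilon)$ and $C_n((n+1,-1),\cdot,\epsilon)$, up to the fixed prefactors $\beta_n,\beta_{n+1}$, and leaves the other factors invariant, so $F_n(\lambda,\epsilon,\cB)=F_n(\beta_{n+1}-\beta_n-\lambda,\epsilon,\cB)$; the symmetry then propagates to $f_n$ through the implicit equation $F_n(\lambda,f_n,\cB)=1$, and trivially on the flat piece where $f_n=0$. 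Convexity of $f_n$ is inherited from the convexity of $\lambda\mapsto\log\E[e^{-\lambda J_n([0,t])}]$ (H\"older's inequality) by passing to the limit in $t$, and continuity on the open interval $]-\beta_n,\beta_{n+1}[$ follows automatically from convexity. The main obstacle is the flat regime $\lambda\in[0,\beta_{n+1}-\beta_n]$: there the Laplace transform no longer detects an exponential rate, and one must furnish the lower bound $f_n\geq 0$ by a separate argument. The polynomial tail of $\psi_i$ at $\tau\to+\infty$---the fact that tracers can be arbitrarily slow---is precisely what prevents exponential decay of $Z_t$ and is the physical origin of the lack of analyticity of $f_n(\cdot,\cB)$ at $\lambda=0$.
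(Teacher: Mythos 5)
Your proposal shares the computational heart of the paper's proof: the per-state factor $C_n(\alpha,\lambda,\epsilon)=\int_0^{+\infty}e^{-\epsilon\tau}M_{\alpha,\cdot}(d\tau)$, the product $F_n=\prod_{\alpha\in E}C_n(\alpha,\lambda,\epsilon)$ arising from the deterministic $2N$-periodic itinerary, the dichotomy governed by $F_n(\lambda,0,\cB)\lessgtr 1$ (your evaluation $F_n(\lambda,0,\cB)=\beta_n\beta_{n+1}/((\beta_n+\lambda)(\beta_{n+1}-\lambda))$ and the resulting interval $[0,\beta_{n+1}-\beta_n]$ are exactly the paper's spectral-radius computation for the matrix $B$), the lower bound $Z_t\geq\bbP(\tau_1>t)\sim\beta_i/(2t^2)$ in the flat regime, convexity by H\"older plus continuity from convexity, and the symmetry of $F_n$ under $\lambda\mapsto\beta_{n+1}-\beta_n-\lambda$. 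Where you diverge is the vehicle for extracting the limit: the paper stays in the time domain, writes $Z_{\alpha,\alpha'}(t)$ as a sum of convolution powers of the kernel $M$, tilts by $e^{-\epsilon_0\tau}$ and the Perron--Frobenius eigenvectors to manufacture a genuine semi-Markov kernel $\hat M$, and applies the Markov renewal theorem to get $Z(t)=e^{\epsilon_0 t}\cdot(\text{const}+O(t))$ exactly (and, in the flat regime, the upper bound $Z(t)\leq C(1+t)$). You instead sum the geometric series on the Laplace-transform side, obtaining $\hat Z=A/(1-F_n)$, and read off the abscissa of convergence.

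The one step you dismiss as ``routine'' is where your route is genuinely thinner. For a nonnegative right-continuous $Z_t$, finiteness of $\int_0^{+\infty}e^{-\epsilon t}Z_t\,dt$ controls only time-averages of $Z_t$: it does not exclude narrow spikes above $e^{\epsilon t}$, so it gives neither the pointwise upper bound $\limsup_t t^{-1}\log Z_t\leq\epsilon_0$ in the supercritical regime nor $f_n\leq 0$ in the flat regime, and the definition \eqref{freeener} requires an actual limit, not a $\limsup$. To close this you must either verify the hypotheses of a Tauberian theorem at the simple pole $\epsilon_0$ or establish a sub/supermultiplicativity estimate for $Z_t$ uniform over the state at time $t$ (delicate here, since the post-$t$ factor depends on the current position and velocity and the tails are only polynomial). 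Done honestly, this leads back to the renewal structure: the tilted-kernel construction of the paper is precisely the device that converts the pole of $\hat Z$ into a bona fide Markov renewal equation to which the key renewal theorem applies, yielding the pointwise asymptotics of $Z_t$ and hence the limit. So your outline is correct and its algebraic content coincides with the paper's; the missing work is concentrated in the Tauberian/regularity step, for which the paper's exponential tilting is the intended (and essentially unavoidable) resolution.
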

\begin{proof}
We call a family of $\sigma$-finite measures $F_{\ga,\ga'}(d\tau)$ on $[0,+\infty[$,
indexed by $(\ga,\ga')\in E\times E$, a {\it kernel}.
Given two kernels $F$ and $G$ we define
their convolution $F*G$ as the kernel
\begin{equation} \label{eq:convolution}
(F*G)_{\ga,\ga'}([0,t]) := \sum_{\gamma\in E} \int_0^t F_{\ga,\gamma}(d\tau) \int_0^{t-\tau}
G_{\gamma,\ga'}(ds).
\end{equation}
We can reduce to the case of $q_0\in\{0,\ldots,N\}$, so that $S_0=0$. We set
\[
\begin{split}
Z_{\alpha,\ga'}(t) & := \E_\alpha\left(\exp\left(-\lambda J_n([0,t])\right)\,
\bo{(X_{N_t}=\ga')}\right)
\\ & = \E_\alpha\left(\exp\left(-\frac\lambda2 \sum_{k=1}^{N_t}v^2_{k} \,
\Delta_n(X_{k-1})\right) \bo{(X_{N_t}=\ga')}\right)
\end{split}
\]
where we recall that $\Delta_n(\alpha):=\left( \bo{(i=n, \ \sigma=1)} -
\bo{(m=i+1,\ \sigma=-1)}\right)$ for $\alpha=(i,\sigma)$, and
\[
N_t := 
\sum_{k=1}^\infty \bo{(S_k\leq t)}.
\]
Let us also set for $\alpha=(i,\sigma)$ and $\alpha'=(i',\sigma')$
\begin{eqnarray}\label{haydn}
\nonumber M_{\alpha,\alpha'}(d\tau) & := & q_{\alpha,\alpha'} \, \psi_{i}(d\tau) \,
\exp\left(-\frac{\lambda\, \Delta_n(\alpha)}{2\, \tau^2}
\right) \\ & = & q_{\alpha,\alpha'} \, \frac{\beta_{i}}{\tau^3} \,
\exp\left(-({\beta_{i}}+\lambda\, \Delta_n(\alpha))\, \frac1{2\tau^2}\right)
\, \bo{(\tau>0)} \, d\tau.
\end{eqnarray}
By summing over all possible values of $N_t$ we obtain
\[
\begin{split}
Z_{\alpha,\ga'}(t) & =  \bbP_\alpha(\tau_1>t)\, \delta_{\alpha,\ga'} +
\sum_{\ell=1}^\infty\E_\alpha\left(\bo{(N_t=\ell)}\,
\exp\left(-\frac\lambda2 \sum_{k=1}^{\ell}v^2_{k} \,
\Delta_n(X_k)\right) \bo{(X_{\ell}=\ga')}\right)
\\ & = \bbP_\alpha(\tau_1>t)\, \delta_{\alpha,\ga'} +\sum_{\ell=1}^\infty M^{\ell*}_{\alpha,\ga'}([0,t]),
\end{split}
\]
where $M^{\ell*}$ is the convolution of $M$ with itself $\ell$ times.
Let us set for $\epsilon\geq 0$
\[
\begin{split}
B^\epsilon_{\alpha,\alpha'} & := \int_0^{+\infty} e^{-\epsilon\tau}\, M_{\alpha,\alpha'}(d\tau)
=q_{\alpha,\alpha'} \, \beta_{i}
\int_0^{+\infty} v\, \exp\left(-\frac\epsilon v -({\beta_{i}}+\lambda\, \Delta_n(\alpha))\, \frac{v^2}2\right) dv
\\ & = q_{\alpha,\alpha'} \, C_\alpha(\lambda,\epsilon),
\end{split}
\]
where $C_\alpha$ is defined in \eqref{calpha}, and let us denote
\[
B_{\alpha,\alpha'}=B^0_{\alpha,\alpha'} = q_{\alpha,\alpha'} \, \beta_{i}
\int_0^{+\infty} v\, \exp\left(-({\beta_{i}}+\lambda\, \Delta_n(\alpha))\, \frac{v^2}2\right) dv =
\frac{q_{\alpha,\alpha'} \, \beta_{i}}{{\beta_{i}}+\lambda\, \Delta_n(\alpha)}.
\]
Because of the explicit form of the coefficients, we have
\[
B_{\alpha,\alpha'} = \left\{
\begin{array}{ll}
q_{\alpha,\alpha'}, \qquad \alpha\notin\{(n,+1), (n+1,-1)\}
\\ \frac{\beta_n}{\beta_n+\lambda}, \qquad \alpha=(n,+1)
\\ \frac{\beta_{n+1}}{\beta_{n+1}-\lambda}, \qquad \alpha=({n+1},-1)
\end{array}
\right.
\]
Recall that $Q:=(q_{\alpha,\alpha'})_{\alpha,\alpha'\in E}$ is a
permutation matrix, more precisely a cyclic permutation of $E$.
The matrix $B$ is obtained by replacing two non-zero
elements of $Q$ with, respectively, $\frac{\beta_n}{\beta_n+\lambda}$ and
$\frac{\beta_{n+1}}{\beta_{n+1}-\lambda}$. The characteristic polynomial of $B$
is therefore equal to
\[
p(t)=t^{2N}-\frac{\beta_n}{\beta_n+\lambda} \cdot
\frac{\beta_{n+1}}{\beta_{n+1}-\lambda}.
\]
Recall that we assume $\beta_n\leq \beta_{n+1}$.

\smallskip\noindent
{\it The case $\lambda\in \, ]-\beta_n,0[
\, \cup \, ]\beta_{n+1}-\beta_n,\beta_{n+1}[$}. In this case
\[
\beta_n\beta_{n+1}> \beta_n\beta_{n+1}+\lambda(\beta_{n+1}-
\beta_n-\lambda)=(\beta_n+\lambda)(\beta_{n+1}-\lambda)>0,
\]
and therefore $B$ has spectral radius
\[
\rho(B) = \left(\frac{\beta_n}{\beta_n+\lambda} \cdot
\frac{\beta_{n+1}}{\beta_{n+1}-\lambda}\right)^{\frac1{2N}}> 1.
\]
Let us go back to the matrix
$B^\epsilon=(B^\epsilon_{\alpha,\alpha'})_{\alpha,\alpha'\in E}$.
In this case, {\it all} non-zero elements of $B$ are modified. Indeed,
if $B_{\alpha,\alpha'}>0$, then $B^\epsilon_{\alpha,\alpha'}=C_\alpha(\lambda,\epsilon)$,
defined as in \eqref{calpha}.
Therefore, the characteristic polynomial is
\[
p^\epsilon(t)=t^{2N}-\prod_{\alpha\in E} C_\alpha(\lambda,\epsilon)=t^{2N}-F(\lambda,\epsilon).
\]
Therefore, the spectral radius of $B^\epsilon$ is
\[
\rho(B^\epsilon) = \left(F_n(\lambda,\epsilon,\cB)\right)^{\frac1{2N}}.
\]
Since $B^\epsilon$ is an irreducible matrix with non-negative entries,
by the Perron-Frobenius theorem \cite[Th. I.6.4]{asmussen},
$\rho(B^\epsilon)$ is an eigenvalue of
$B^\epsilon$ with multiplicity 1; moreover this eigenvalue
is associated with a right-eigenvector $(r_\alpha)$ and a
left-eigenvector $(l_\alpha)$
such that $r_\alpha>0$ and $l_\alpha>0$ for all $\alpha\in E$.

Since $\epsilon\to\rho(B^\epsilon)$ is strictly decreasing
with value $\rho(B)>1$ at $\epsilon=0$ and 0 limit as
$\epsilon\to+\infty$, then there exists a
unique $\epsilon_0$ such that $F_n(\lambda,\epsilon_0,\cB)=1$.
Let us set
\[
\hat{Z}_{\alpha,\ga'}(t):=\frac{e^{-\epsilon_0t}\, {Z}_{\alpha,\ga'}\, r_{\ga'}}{r_\alpha},
\qquad \hat{M}_{\alpha,\ga'}(d\tau):=\frac{e^{-\epsilon_0\tau}\, {M}_{\alpha,\ga'}(d\tau)\, r_{\ga'}}{r_\alpha}.
\]
By construction, $\hat{M}$ is a {\it semi-Markov kernel}, i.e.
\[
\sum_{\ga'\in E} \int_0^{+\infty} \hat{M}_{\alpha,\ga'}(d\tau)=1, \qquad \forall \ \alpha\in E.
\]
Moreover, $\hat{Z}_{\alpha,\ga'}(t)$ satisfies
\[
\hat Z_{\alpha,\ga'}(t) = e^{-\epsilon_0t}\, \bbP_\alpha(\tau_1>t)\, \delta_{\alpha,\ga'}
+\sum_{\ell=1}^\infty \hat M^{\ell*}_{\alpha,\ga'}([0,t]).
\]
Let now $(\hat X_k, \hat \tau_{k+1})_{k\geq 0}$ be a Markov renewal process with kernel
$\hat M$. Then we can write
\[
\hat U_{\alpha,\ga'}([0,t]) := \sum_{\ell=1}^\infty \hat M^{\ell*}_{\alpha,\ga'}([0,t])
=\sum_{\ell=1}^\infty \bbP_\alpha(\hat \tau_1+\cdots+\hat \tau_\ell\leq t, \, \hat X_\ell=\ga').
\]
Notice that the kernel $\hat M$ has finite mean:
\[
    \hat \mu := \sum_{\ga,\ga' \in E} \int_0^{+\infty} \tau \, \hat\nu_\ga \,
\hat M_{\ga,\ga'}(d\tau) \in \ (0,\infty),
\]
where $\hat \nu_{\ga'}=l_{\ga'}/\sum_\gamma l_\gamma$ is the unique invariant measure
of $\hat X$ on $E$.
Then, by the Markov Renewal theorem \cite[Th.~VII.4.3]{asmussen}
\begin{equation} \label{eq:per_renewal_theorem}
\lim_{t\to+\infty} \frac1t\, {\hat U_{\ga,\ga'}([0,t])} \;=\; \frac{\hat \nu_{\ga'}}{\hat \mu}.
\end{equation}
Therefore we obtain
\[
{Z}_{\alpha,\ga'}(t)=\bbP_\alpha(\tau_1>t)\, \delta_{\alpha,\ga'} + e^{\epsilon_0t}
\, \hat U_{\alpha,\ga'}([0,t]) \, \frac{r_\alpha}{r_{\ga'}}
\]
and summing over $\ga'\in E$
\[
\E\left(\exp\left(-\lambda J_n([0,t])\right)\right) =
\bbP_\alpha(\tau_1>t) + e^{\epsilon_0t}\, {r_\alpha} \sum_{\ga'} \frac{\hat U_{\alpha,\ga'}([0,t])}{r_{\ga'}}
\]
and therefore,
\[
f_n(\lambda,\cB)=\lim_{t\to+\infty} \frac1t \,
\log\E\left(\exp\left(-\lambda J_n([0,t])\right)\right)
=\epsilon_0.
\]

\smallskip\noindent
{\it The case  $\lambda\in\, [0,\beta_{n+1}-\beta_n]$}.
In this case we have $\rho:=\rho(B)\in \, ]0,1]$. Recall that
\[
\begin{split}
Z_{\alpha,\ga'}(t) & =
\bbP_\alpha(\tau_1>t)\, \delta_{\alpha,\ga'} +\sum_{\ell=1}^\infty M^{\ell*}_{\alpha,\ga'}([0,t])
\geq \bbP_\alpha(\tau_1>t)\, \delta_{\alpha,\ga'},
\end{split}
\]
where for $\alpha=(i,\sigma)$ and $\alpha'=(i',\sigma')$
\[
\bbP_\alpha(\tau_1>t) =
\, \left(1-e^{-{\beta_{i}}\, \frac1{2t^2}}\right)
\]
and summing over $\ga'\in E$
\[
\E\left(\exp\left(-\lambda J_n([0,t])\right)\right) \geq
\bbP_\alpha(\tau_1>t) =
\, \left(1-e^{-\frac{\beta_{i}}{2t^2}}\right),
\]
and therefore
\begin{eqnarray}\label{esti}
\nonumber
& \liminf_{t\to+\infty} \frac1t \, \log\E\left(\exp\left(-\lambda J_n([0,t])\right)\right)
\geq \liminf_{t\to+\infty} \frac1t \, \log\left(1-e^{-\frac{\beta_{i}}{2t^2}}\right)
\\ & \sim \lim_{t\to+\infty} \frac1t \, \log\left(\frac{\beta_{i}}{2t^2}\right) = 0.
\end{eqnarray}
Now, let us set
\[
\tilde{M}_{\alpha,\ga'}(d\tau):=\frac{{M}_{\alpha,\ga'}(d\tau)\, r_{\ga'}}{\rho\, r_\alpha}.
\]
By construction, $\tilde{M}$ is a {\it semi-Markov kernel}, i.e.
\[
\sum_{\ga'\in E} \int_0^{+\infty} \tilde{M}_{\alpha,\ga'}(d\tau)=1, \qquad \forall \ \alpha\in E.
\]
Moreover, ${Z}_{\alpha,\ga'}(t)$ satisfies
\[
Z_{\alpha,\ga'}(t) = \bbP_\alpha(\tau_1>t)\, \delta_{\alpha,\ga'}
+\sum_{\ell=1}^\infty \rho^{-\ell}\, \tilde M^{\ell*}_{\alpha,\ga'}([0,t]).
\]
Let now $(\tilde X_k, \tilde \tau_{k+1})_{k\geq 0}$ be a Markov renewal process with kernel
$\tilde M$. Then we can write
\[
\tilde U_{\alpha,\ga'}([0,t]) := \sum_{\ell=1}^\infty \tilde M^{\ell*}_{\alpha,\ga'}([0,t])
=\sum_{\ell=1}^\infty \bbP_\alpha(\tilde \tau_1+\cdots+\tilde \tau_\ell\leq t, \, \tilde X_\ell=\ga').
\]
Notice that the kernel $\tilde M$ has finite mean:
\[
    \tilde \mu := \sum_{\ga,\ga' \in E} \int_0^{+\infty} \tau \, \tilde\nu_\ga \,
\tilde M_{\ga,\ga'}(d\tau) \in \ (0,\infty),
\]
where $\tilde \nu_{\ga'}$ is the unique invariant measure
of $\tilde X$ on $E$. Indeed, by \eqref{haydn},
$M_{\ga,\ga'}(d\tau)\sim C\, \tau^{-3}$ as $\tau\to+\infty$.
Then, by the Markov Renewal theorem \cite[Th.~VII.4.3]{asmussen}
\begin{equation} \label{eq:per_renewal_theorem2}
\lim_{t\to+\infty} \frac1t\, {\tilde U_{\ga,\ga'}([0,t])} \;=\; \frac{\tilde \nu_{\ga'}}{\tilde \mu}.
\end{equation}
Therefore we obtain
\[
Z_{\alpha,\ga'}(t) \leq \, \delta_{\alpha,\ga'} +
\, \tilde U_{\alpha,\ga'}([0,t]) \, \frac{r_\alpha}{r_{\ga'}}
\]
and summing over $\ga'\in E$
\[
\E\left(\exp\left(-\lambda J_n([0,t])\right)\right) \leq 1 +
\sum_{\ga'\in E} \tilde U_{\alpha,\ga'}([0,t]) \, \frac{r_\alpha}{r_{\ga'}} \leq C(1+t),
\]
for some constant $C>0$. Hence
\begin{equation}\label{esti2}
\limsup_{t\to+\infty} \frac1t \, \log\E\left(\exp\left(-\lambda J_n([0,t])\right)\right)
\leq \limsup_{t\to+\infty} \frac1t \, \log\left(C(1+t)\right)  = 0.
\end{equation}
By \eqref{esti} and \eqref{esti2} we obtain
\[
f_n(\lambda,\cB)=\lim_{t\to+\infty} \frac1t \,
\log\E\left(\exp\left(-\lambda J_n([0,t])\right)\right)=0.
\]

\smallskip\noindent
{\it Continuity and convexity of $f_n(\cdot,\cB)$}.
It is a standard fact that
\[
\frac{\partial^2}{\partial\lambda^2}\frac1t \, \log\E\left(\exp\left(-\lambda J_n([0,t])\right)\right)
=\frac1t \, \frac{\E\left(J^2_n([0,t])e^{-\lambda J_n([0,t])}\right)-\left(\E\left(J_n([0,t])e^{-\lambda J_n([0,t])}\right)\right)^2}{\left(\E\left(e^{-\lambda J_n([0,t])}\right)\right)^2}
\]
is non-negative, so by passing to the limit $t\to+\infty$,
$f_n(\cdot,\cB)$ is convex and finite and therefore continuous.

\smallskip\noindent
{\it The Gallavotti-Cohen symmetry relation}.
Equation \eqref{GC} follows from the analogous symmetry of $F$
\[
F_n(\lambda,\epsilon,\cB)=F_n(\beta_{n+1}-\beta_n-\lambda,\epsilon,\cB).
\]
\end{proof}
\begin{rem}\label{analytic}
{\rm The proof shows that the lack of analyticity of $f_n(\cdot,\cB)$ is related to the tail of the
distribution of $\tau_i$ or, equivalently, to the probability of having slow particles
in the system. Indeed, the crucial estimate \eqref{esti}, which shows that $f_n\geq 0$,
follows from the polynomial decay of the probability that a particle takes an amount
of time $t$ to reach the next scatterer, namely $\bbP(\tau_i>t)\sim t^{-2}$, $t\to+\infty$.
This is also related to the absence of spectral gap of the dynamics.
Physically, the origin of this phenomenon is the fact that the particle may get an arbitrarily small speed
(taking thus an arbitrarily large amount of time before the next collision)
which prevents the system from converging exponentially fast to the stationary state.   This should not be
regarded as an artefact of the model but rather as a general feature of collisional dynamics.
We leave a more complete study of the convergence to the invariant measure to future work.
}

\end{rem}

\subsection{Green-Kubo formula and fluctuation-dissipation relations}
\label{fluctu}

It is well-known that the Gallavotti-Cohen symmetry \eqref{GC} implies the
identification of the thermal conductivity with the variance of the time-integrated
current, a relation known as the Green-Kubo formula.  See for instance \cite{reybellet1,reybellet2}
for a derivation of this fact in the context of chains of anharmonic oscillators,
where the analyticity of the cumulant generating function is proven under suitable
hypothesis on the interaction potentials.

In fact, the Gallavotti-Cohen symmetry \eqref{GC} implies a relation between certain
partial derivatives of the cumulant generating function $f_n$, which is correct if the
generating function is smooth for $\lambda$ close to $0$. The identification of the
partial derivatives of $f_n$ with the physical quantities of interest, namely the
thermal conductivity and the variance of the current, requires an exchange of limits
which is often hard to justify.

In our models, as remarked before Proposition \ref{mgf},
the cumulant generating function is {\it not} smooth
for $\lambda$ close to $0$ and $\beta_n\ne\beta_{n+1}$ and therefore the computations
which are usually performed require some care. We are nevertherless able to obtain
the desired results, by considering only left-derivatives when needed and by
showing directly that the physical quantities of interest satisfy the expected relations.
Notice that we shall denote for
any function $g:\, ]-\varepsilon,\varepsilon[\mapsto \R$
\[
g(0^-):=\lim_{t\uparrow 0}g(t), \qquad g(0^+):=\lim_{t\downarrow 0}g(t),
\]
whenever any of such limit exists.

\begin{prop}\label{myt} Let $\cB=(\beta_0,\ldots,\beta_N)\in\R_+^{N+1}$.
If $0<\beta_n<\beta_{n+1}$ then
\[
\frac{\partial f_n}{\partial \lambda} \, (0^-,\cB)= -\frac{T_{n}-T_{n+1}}{Z_N}
=-\cJ_n,
\]
where $T_i=\beta_i^{-1}$, see \eqref{cJ}.
\end{prop}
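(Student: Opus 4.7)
The plan is to exploit the characterization of $f_n$ from Proposition \ref{mgf}: for $\lambda\in\,]-\beta_n,0[$, we have $f_n(\lambda,\cB)=\epsilon_0(\lambda)$ where $\epsilon_0(\lambda)>0$ is the unique solution of
\[
F_n(\lambda,\epsilon_0(\lambda),\cB)=1.
\]
By continuity of $f_n(\cdot,\cB)$ (also established in Proposition \ref{mgf}) and the fact that $f_n(0,\cB)=0$, we have $\epsilon_0(\lambda)\downarrow 0$ as $\lambda\uparrow 0$. The strategy is then to compute $\epsilon_0'(0^-)$ by implicit differentiation at the boundary point $(\lambda,\epsilon)=(0,0)$.

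First I would verify that $F_n(0,0,\cB)=1$. At $\lambda=\epsilon=0$ each factor reduces to $C_n(\alpha,0,0)=\beta_i\int_0^\infty v\,e^{-\beta_i v^2/2}\,dv=1$, so the product over the $2N$ elements of $E$ is $1$. Next, using elementary Gaussian moments ($\int_0^\infty v^3 e^{-\beta v^2/2}dv=2/\beta^2$ and $\int_0^\infty e^{-\beta v^2/2}dv=\sqrt{\pi/(2\beta)}$), I would compute
\[
\frac{\partial C_n}{\partial\lambda}(\alpha,0,0)=-\Delta_n(\alpha)\,T_i,\qquad
\frac{\partial C_n}{\partial\epsilon}(\alpha,0,0)=-\sqrt{\tfrac{\pi\beta_i}{2}}.
\]
Since $F_n=\prod_\alpha C_n(\alpha,\cdot,\cdot)$ and every factor equals $1$ at the origin, the logarithmic derivative collapses the product:
\[
\frac{\partial F_n}{\partial\lambda}(0,0,\cB)=\sum_{\alpha\in E}\frac{\partial C_n}{\partial\lambda}(\alpha,0,0)=-(T_n-T_{n+1}),
\]
using that $\Delta_n(\alpha)$ is supported on $\{(n,+1),(n+1,-1)\}$. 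Similarly, summing $-\sqrt{\pi\beta_i/2}$ over $E$ (the index $i\in\{1,\ldots,N-1\}$ appears with both signs $\sigma=\pm 1$ while $i=0,N$ appear only once) reproduces exactly the definition of $Z_N$, so
\[
\frac{\partial F_n}{\partial\epsilon}(0,0,\cB)=-Z_N.
\]

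The final step is the implicit differentiation
\[
\epsilon_0'(0^-)=-\frac{\partial_\lambda F_n(0,0,\cB)}{\partial_\epsilon F_n(0,0,\cB)}=-\frac{-(T_n-T_{n+1})}{-Z_N}=-\frac{T_n-T_{n+1}}{Z_N},
\]
which is the claimed $-\cJ_n$. The main delicate point is justifying the one-sided implicit function argument at the boundary of the admissible region, since $(0,0)$ sits on the edge of the set $\{F_n=1\}$ and on the right of $\lambda=0$ the left-hand side $F_n(\lambda,0,\cB)$ drops below $1$, so $\epsilon_0$ simply cannot be continued as a positive solution there. I would handle this by first confirming, via dominated convergence, that $F_n$ is $C^1$ on a neighborhood of $(0,0)$ in $\,]-\beta_n,\beta_{n+1}[\times[0,\infty[$, then checking that $\partial_\epsilon F_n(0,0,\cB)=-Z_N\neq 0$; the standard implicit function theorem then produces a $C^1$ branch $\epsilon_0(\lambda)$ defined for $\lambda$ in a two-sided neighborhood of $0$, with $\epsilon_0(0)=0$. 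On the left of $0$ this branch coincides with the positive root given by Proposition \ref{mgf} (by uniqueness), so its derivative at $0^-$ equals the one computed above. This careful bookkeeping between the one-sided character of $f_n$ and the two-sided smoothness of the implicit branch of $F_n=1$ is the principal technical subtlety.
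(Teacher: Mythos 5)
Your proposal is correct and follows essentially the same route as the paper: both exploit the implicit relation $F_n(\lambda,f_n(\lambda,\cB),\cB)=1$ from Proposition \ref{mgf}, compute $\partial_\lambda F_n(0,0,\cB)=-(T_n-T_{n+1})$ and $\partial_\epsilon F_n(0,0,\cB)=-Z_N$ (your logarithmic-derivative collapse of the product is the same computation the paper does via the product rule, since all factors equal $1$ at the origin), and conclude by implicit differentiation; the paper simply differentiates for $\lambda<0$ and lets $\lambda\uparrow 0$ instead of working at the boundary point. One small caveat: your appeal to a genuinely two-sided $C^1$ branch through $(0,0)$ overstates what the implicit function theorem can give here, because $C_n(\alpha,\lambda,\epsilon)$ diverges for $\epsilon<0$ (so $F_n$ admits no extension past $\epsilon=0$) and $F_n(\lambda,0,\cB)<1$ for small $\lambda>0$; but this is harmless, since the quantity to be computed is one-sided and a first-order Taylor expansion of $F_n$ at $(0,0)$ along the curve $(\lambda,\epsilon_0(\lambda))$, $\lambda\uparrow 0$, already yields $\epsilon_0(\lambda)/\lambda\to-\partial_\lambda F_n/\partial_\epsilon F_n=-\cJ_n$.
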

\begin{proof} We consider $\lambda\in\, ]-\beta_n,0[$. Then by Proposition \ref{mgf},
$f_n(\lambda,\cB)>0$ is defined by the relation $F_n(\lambda,f_n(\lambda,\cB),\cB)=1$.
Hence by the implicit function Theorem
\[
\frac{\partial f_n}{\partial \lambda}(\lambda,\cB)
= - \frac{\partial F_n}{\partial \lambda}(\lambda,f_n(\lambda,\cB),\cB)
\left/ \frac{\partial F_n}{\partial\epsilon}(\lambda,f_n(\lambda,\cB),\cB)\right. .
\]
A computation yields
\[
\begin{split}
 & \frac{\partial F_n}{\partial \lambda}(\lambda,\epsilon,\cB) = \prod_{\alpha\in E\backslash
\{(n,1),(n+1,-1)\}} C_\alpha(0,\epsilon) \quad \cdot
 \\ & \cdot \beta_n\beta_{n+1}\int_{\R_+^2} v_1\, v_2\,
 \frac12\left(v_2^2-v_1^2\right)
\exp\left(-\frac{\epsilon}{v_1} -\frac{\epsilon}{v_2}-\left(\beta_n+\lambda\right)\, \frac{v_1^2}2
-\left(\beta_{n+1}-\lambda\right)\, \frac{v_2^2}2\right) dv_1 \,dv_2,
\end{split}
\]
\[
\begin{split}
 & \frac{\partial F_n}{\partial \epsilon}(\lambda,\epsilon,\cB) \\
& = \sum_{\alpha=(i,\sigma)\in E} \beta_i\int_{\R_+} v\,
 \frac1{v} \exp\left(-\frac{\epsilon}{v} -\left(\beta_i+\Delta_n(\alpha)\lambda\right)\frac{v^2}2
\right) dv \ \prod_{\alpha'\in E\backslash
\{\alpha\}} C_{\alpha'}(\lambda,\epsilon).
\end{split}
\]
Since $f_n(0,\cB)=0$, by letting $\lambda\uparrow 0$ we find
\[
\begin{split}
\frac{\partial f_n}{\partial \lambda}(0^-,\cB) & =
\frac{\beta_n\beta_{n+1}\int_{\R_+^2} v_1 \, v_2 \,
 \frac12\left(v_2^2-v_1^2\right)
e^{-\beta_n\frac{v_1^2}2
-\beta_{n+1}\frac{v_2^2}2} \, dv_1 \, dv_2}
{\sum_{\alpha=(i,\sigma)\in E} \beta_i\int_{\R_+}
\exp\left(-\beta_i\frac{v^2}2\right) dv}
\\ & =\frac{\frac1{\beta_{n+1}}-\frac1{\beta_n}}{\sum_{\alpha=(i,\sigma)\in E}
\sqrt{\frac{\pi\beta_i}2}} = \frac{T_{n+1}-T_n}{Z_N}.
\end{split}
\]
\end{proof}
\noindent
We consider now the equilibrium case $\beta_0=\cdots=\beta_N=\beta>0$.
\begin{prop}\label{bern1} Let $\cB_{\rm eq}:=(\beta,\ldots,\beta)\in\R_+^{N+1}$, $\beta>0$.
Then the function $]-\beta,\beta[\, \ni\lambda\mapsto f_n(\lambda,\cB_{\rm eq})$
is analytic, even and
\[
\frac{\partial f_n}{\partial \lambda} \, (0,\cB_{\rm eq})=0, \qquad
\frac{\partial^2 f_n}{\partial \lambda^2} \, (0,\cB_{\rm eq})=\frac1N\sqrt{\frac{2}{\pi\beta^5}}.
\]
\end{prop}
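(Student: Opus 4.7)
The overall plan is to deduce this from Proposition \ref{mgf} by applying the analytic implicit function theorem to the defining equation $F_n(\lambda, \epsilon, \cB_{\rm eq}) = 1$. Evenness is immediate from the Gallavotti--Cohen relation \eqref{GC}: since $\beta_{n+1} - \beta_n = 0$ at equilibrium, $f_n(\lambda, \cB_{\rm eq}) = f_n(-\lambda, \cB_{\rm eq})$, which forces $\partial_\lambda f_n(0, \cB_{\rm eq}) = 0$ as soon as the derivative exists.

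For analyticity I would first note that $F_n$ is jointly real-analytic in $(\lambda, \epsilon)$ on $]-\beta, \beta[ \times \R_+$. At the origin every factor $C_n(\alpha, 0, 0) = \beta \int_0^{+\infty} v\, e^{-\beta v^2/2} \, dv$ equals $1$, so $F_n(0, 0, \cB_{\rm eq}) = 1$. Differentiating under the integral gives
\[
\frac{\partial F_n}{\partial \epsilon}(0, 0, \cB_{\rm eq}) = -|E|\, \beta \int_0^{+\infty} e^{-\beta v^2/2}\, dv = -2N\sqrt{\pi\beta/2} = -Z_N \ne 0,
\]
so the analytic implicit function theorem produces an analytic branch $\lambda \mapsto \epsilon_0(\lambda)$ near $0$ with $\epsilon_0(0) = 0$ and $F_n(\lambda, \epsilon_0(\lambda), \cB_{\rm eq}) = 1$. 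By the uniqueness of the positive solution stated in Proposition \ref{mgf} together with the continuity of $f_n(\cdot, \cB_{\rm eq})$, this branch coincides with $f_n(\cdot, \cB_{\rm eq})$ on a neighborhood of $0$, where $f_n(0, \cB_{\rm eq}) = 0$. Away from $0$ the strict monotonicity of $\epsilon \mapsto F_n(\lambda_*, \epsilon, \cB_{\rm eq})$ inherited from the Perron--Frobenius argument used in the proof of Proposition \ref{mgf} allows a second application of the implicit function theorem at every $\lambda_*$, yielding analyticity on all of $]-\beta, \beta[$.

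For the derivatives I would exploit the remarkable simplification
\[
F_n(\lambda, 0, \cB_{\rm eq}) = \frac{\beta}{\beta + \lambda} \cdot \frac{\beta}{\beta - \lambda} = \frac{\beta^2}{\beta^2 - \lambda^2} = 1 + \frac{\lambda^2}{\beta^2} + O(\lambda^4),
\]
which follows because $C_n(\alpha, \lambda, 0) = \beta/(\beta + \lambda \Delta_n(\alpha))$ is nontrivial only for $\alpha \in \{(n, +1), (n+1, -1)\}$. Differentiating the identity $F_n(\lambda, \epsilon_0(\lambda), \cB_{\rm eq}) = 1$ once at $\lambda = 0$ together with $\partial_\lambda F_n(0, 0, \cB_{\rm eq}) = 0$ gives $\epsilon_0'(0) = 0$ (consistent with evenness); differentiating twice, and using $\epsilon_0'(0) = 0$, yields
\[
\epsilon_0''(0) = -\frac{\partial_\lambda^2 F_n(0, 0, \cB_{\rm eq})}{\partial_\epsilon F_n(0, 0, \cB_{\rm eq})} = \frac{2/\beta^2}{N\sqrt{2\pi\beta}} = \frac{1}{N}\sqrt{\frac{2}{\pi \beta^5}}.
\]
The main subtle point is matching the analytic branch $\epsilon_0$ with $f_n(\cdot, \cB_{\rm eq})$ exactly at $\lambda = 0$: Proposition \ref{mgf} declares $f_n = 0$ there by a separate non-analytic argument (the degenerate interval $[0, \beta_{n+1} - \beta_n]$ collapses to $\{0\}$), so one has to invoke continuity of $f_n$ and uniqueness of the solution $\epsilon_0$ to identify the two on a full neighborhood of $0$. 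Everything else reduces to bookkeeping with the explicit Gaussian integrals.
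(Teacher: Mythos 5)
Your proposal follows the paper's own route almost exactly: evenness from the Gallavotti--Cohen symmetry \eqref{GC}, the implicit function theorem applied to the defining relation $F_n(\lambda,\epsilon_0(\lambda),\cB_{\rm eq})=1$, and the formula $\epsilon_0''(0)=-\partial_\lambda^2 F_n(0,0)\,/\,\partial_\epsilon F_n(0,0)$ after using $\epsilon_0'(0)=0$ to kill the cross terms. Your two deviations are both correct and arguably cleaner: you get $\partial_\lambda f_n(0,\cB_{\rm eq})=0$ from evenness rather than from Proposition \ref{myt}, and you read off $\partial^2_\lambda F_n(0,0,\cB_{\rm eq})=2/\beta^2$ from the closed form $F_n(\lambda,0,\cB_{\rm eq})=\beta^2/(\beta^2-\lambda^2)$ instead of evaluating the paper's double Gaussian integral (the two computations agree). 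You are also more explicit than the paper about matching the implicit-function-theorem branch with $f_n$ across $\lambda=0$, via uniqueness of the positive root and continuity; that is a genuine improvement in rigor.

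The one step that does not hold as you state it is the joint real-analyticity of $F_n$ ``on $]-\beta,\beta[\times\R_+$'', which you need in order to run the \emph{analytic} implicit function theorem at the point $(0,0)$. Each factor $C_n(\alpha,\lambda,\epsilon)=\beta_i\int_0^{+\infty}v\,e^{-\epsilon/v-(\beta_i+\lambda\Delta_n(\alpha))v^2/2}\,dv$ is defined only for $\epsilon\geq 0$; its first $\epsilon$-derivative is finite at $\epsilon=0$, but $\partial^2_\epsilon C_n=\beta_i\int_0^{+\infty}v^{-1}e^{-\epsilon/v-\cdots}\,dv$ diverges like $\log(1/\epsilon)$ as $\epsilon\downarrow 0$ (equivalently, $\tau_i$ has infinite second moment). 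So $F_n$ is $C^1$ but not $C^2$, let alone analytic, at $\epsilon=0$, and the implicit function theorem cannot be invoked there; expanding $\log F_n$ one finds $f_n(\lambda,\cB_{\rm eq})=\frac{\lambda^2}{N\beta^2\sqrt{2\pi\beta}}+O\bigl(\lambda^4\log(1/|\lambda|)\bigr)$, which casts doubt on analyticity at $\lambda=0$ altogether. To be fair, this defect is inherited from the paper, whose proof of analyticity is the single sentence ``follows from the implicit function theorem''; away from $\lambda=0$, where $\epsilon_0(\lambda)>0$, your argument is sound. The derivative computations themselves survive: they use only $\partial_\epsilon F_n(0,0)$, which is finite and nonzero, together with $\epsilon_0'(0)=0$, and the troublesome term $\partial^2_\epsilon F_n\cdot(\epsilon_0')^2$ in the second implicit differentiation is $O(\lambda^2\log(1/|\lambda|))\to 0$, so the stated values of the first and second derivatives at $0$ are correct. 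If you want a complete proof you should replace the appeal to analyticity at the origin by this direct asymptotic expansion (or restrict the analyticity claim to $\lambda\neq 0$).
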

\begin{proof} The relation $f_n(\lambda,\cB_{\rm eq})=f_n(-\lambda,\cB_{\rm eq})$ follows from
the Gallavotti-Cohen symmetry \eqref{GC}. The analyticity follows from the
implicit function theorem.

Since $T_n=T_{n+1}$, we know by Proposition \ref{myt} that $\frac{\partial f_n}{\partial \lambda} \, (0,\cB_{\rm eq})=0$.
We consider $\lambda\in\, ]-\beta,0[$. Then by Proposition \ref{mgf},
$f_n>0$ is defined by the relation $F_n(\lambda,f_n(\lambda,\cB_{\rm eq}),\cB_{\rm eq})=1$.
Hence, differentiating this relation w.r.t. $\lambda$ and then setting
$\lambda=0$ we obtain, since $f_n(0,\cB_{\rm eq})=\frac{\partial f_n}{\partial \lambda} \, (0,\cB_{\rm eq})=0$,
\[
\frac{\partial^2 f_n}{\partial \lambda^2}(0,\cB_{\rm eq})
= - \frac{\partial^2 F_n}{\partial \lambda^2}(0,0,\cB_{\rm eq})
\left/ \frac{\partial F_n}{\partial\epsilon}(0,0,\cB_{\rm eq})\right. .
\]
Now,
\[
\begin{split}
& \frac{\partial^2 F_n}{\partial \lambda^2}(0,0,\cB_{\rm eq}) =
 \beta^2\int_{\R_+^2} v_1\, v_2\,
 \frac14\left(v_2^2-v_1^2\right)^2
\exp\left(-\frac\beta2\left(v_1^2+v_2^2\right)\right) dv_1 \,dv_2
\\ & = \frac{\beta^2}4\int_{\R_+^2} v_1\, v_2
 \left(v_2^4+v_1^4-2v_1^2v_2^2\right)
e^{-\frac\beta2\left({v_1^2}+{v_2^2}\right) } dv_1 \,dv_2=
\frac{\beta^2}4\left( 2\,\frac1\beta\, \frac8{\beta^3}
-2\, \left(\frac2{\beta^2}\right)^2\right)=\frac2{\beta^2},
\end{split}
\]
\[
\begin{split}
\frac{\partial F_n}{\partial \epsilon}(0,0,\cB_{\rm eq}) =
 - 2N\beta\int_{\R_+} v\,
 \frac1{v}\exp\left(-\beta\, \frac{v^2}2\right) dv = -N\sqrt{2\pi\beta},
\end{split}
\]
so that
\[
\frac{\partial^2 f_n}{\partial \lambda^2}(0,\cB_{\rm eq})=\frac2{\beta^2} \cdot \frac1N\sqrt{\frac1{2\pi\beta}}
=\frac1N\sqrt{\frac{2}{\pi\beta^5}}.
\]
\end{proof}

\noindent Let $n\in\{0,\ldots,N-2\}$ and for $0\leq\Delta\beta<\beta$, let us set
\[
\beta_i:=\beta>0, \quad i\notin\{n,n+1\}, \qquad \beta_n:=\beta-\frac{\Delta\beta}2,
\quad \beta_{n+1}:=\beta+\frac{\Delta\beta}2,
\]
and let us set $\cB(\Delta\beta):=(\beta_0,\ldots,\beta_N)$. Notice that $\cB(0)=\cB_{\rm eq}$.
Let us set
\[
A:=\left\{(\lambda,\Delta\beta): \, \Delta\beta\in[0,\beta[, \ \lambda\in
\left]-\beta,\beta-{\Delta\beta}\right[ \, \right\}
\]
and $g_n:A\mapsto\R_+$
\[
g_n(\lambda,\Delta\beta):=f_n\left(\lambda,\cB(\Delta\beta)\right),
\quad \forall \ (\lambda,\Delta\beta)\in \,A.
\]
\begin{prop}\label{bern3}
We have the Green-Kubo relation
\[
\frac{\partial^2 g_n}{\partial \lambda^2} (0,0) =
-2\, \frac{\partial}{\partial\Delta\beta}\, \frac{\partial g_n}{\partial \lambda} (0,0)
= \frac1N\sqrt{\frac{2}{\pi\beta^5}}.
\]
\end{prop}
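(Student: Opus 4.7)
The plan is to break the asserted chain of equalities into two independent pieces, each handled by invoking an earlier proposition and then performing a short calculation. For the first equality, observe that $\cB(0)=\cB_{\rm eq}$, so $g_n(\lambda,0)=f_n(\lambda,\cB_{\rm eq})$ as functions of $\lambda$ in a neighborhood of $0$. Differentiating twice at $\lambda=0$ and applying Proposition \ref{bern1} immediately gives
\[
\frac{\partial^2 g_n}{\partial\lambda^2}(0,0)=\frac{\partial^2 f_n}{\partial\lambda^2}(0,\cB_{\rm eq})=\frac1N\sqrt{\frac{2}{\pi\beta^5}}.
\]

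For the second equality, the idea is to use Proposition \ref{myt} to express $\frac{\partial g_n}{\partial\lambda}(0^-,\Delta\beta)$ explicitly for $\Delta\beta>0$ and show that this expression extends smoothly across $\Delta\beta=0$. For $\Delta\beta\in\,]0,\beta[$ we have $\beta_n=\beta-\Delta\beta/2<\beta+\Delta\beta/2=\beta_{n+1}$, and Proposition \ref{myt} yields
\[
h(\Delta\beta):=\frac{\partial g_n}{\partial\lambda}(0^-,\Delta\beta)=-\frac{T_n(\Delta\beta)-T_{n+1}(\Delta\beta)}{Z_N(\Delta\beta)},
\]
where $T_n=(\beta-\Delta\beta/2)^{-1}$, $T_{n+1}=(\beta+\Delta\beta/2)^{-1}$, and $Z_N(\Delta\beta)$ is the normalization from Proposition \ref{markovren2} evaluated at $\cB(\Delta\beta)$. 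The right-hand side is real-analytic in a neighborhood of $\Delta\beta=0$ (the denominator stays bounded away from zero), and $h(0)=0$ is consistent with $\frac{\partial g_n}{\partial\lambda}(0,0)=0$ from Proposition \ref{bern1}, so this is the correct smooth extension to differentiate.

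The computation is then routine. Writing $T_n-T_{n+1}=\Delta\beta/(\beta^2-\Delta\beta^2/4)$ one finds $(T_n-T_{n+1})'|_{\Delta\beta=0}=1/\beta^2$, while $Z_N(0)=N\sqrt{2\pi\beta}$; because $T_n-T_{n+1}$ vanishes at $\Delta\beta=0$, the $Z_N'(0)$ contribution drops out of $h'(0)$, leaving
\[
\frac{\partial}{\partial\Delta\beta}\frac{\partial g_n}{\partial\lambda}(0,0)=h'(0)=-\frac{1}{N\beta^{2}\sqrt{2\pi\beta}}.
\]
Multiplying by $-2$ gives $2/(N\beta^{5/2}\sqrt{2\pi})=N^{-1}\sqrt{2/(\pi\beta^{5})}$, which matches the value obtained in the first step. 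The only real subtlety, and the one point that could trip the reader, is interpretive rather than technical: for $\Delta\beta>0$ the map $\lambda\mapsto g_n(\lambda,\Delta\beta)$ has a kink at $\lambda=0$ since it vanishes on $[0,\Delta\beta]$ by Proposition \ref{mgf}, so $\frac{\partial g_n}{\partial\lambda}(0,\Delta\beta)$ must be read as the left derivative supplied by Proposition \ref{myt}. Once this convention is fixed, the Green--Kubo identity reduces to the elementary calculation above.
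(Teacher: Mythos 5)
Your proof is correct and follows essentially the same route as the paper: the first equality via Proposition \ref{bern1} and the analyticity of $g_n(\cdot,0)$, and the second by differentiating in $\Delta\beta$ the explicit formula for $\frac{\partial g_n}{\partial\lambda}(0^-,\Delta\beta)$ supplied by Proposition \ref{myt}. Your sign for $\frac{\partial}{\partial\Delta\beta}\frac{\partial g_n}{\partial\lambda}(0,0)$ (negative) is the one actually required for the stated identity, and your remark that the $\lambda$-derivative must be read as a left derivative because $g_n(\cdot,\Delta\beta)$ vanishes on $[0,\Delta\beta]$ is exactly the paper's convention.
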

\begin{proof}
Notice first that $g_n(\lambda,0)=f_n(\lambda,0)$ is analytic around $\lambda=0$, so that
\[
\frac{\partial^2 g_n}{\partial \lambda^2} (0,0) = \frac{\partial^2 f_n}{\partial \lambda^2} (0,\cB_{\rm eq}) =
\frac1N\sqrt{\frac{2}{\pi\beta^5}}
\]
by Proposition \ref{bern1}. Now, for fixed $\Delta\beta>0$, we have
by Proposition \ref{myt}
\[
\begin{split}
\frac{\partial g_n}{\partial \lambda} (0,\Delta\beta)& =-\frac{\beta_{n}^{-1}-\beta_{n+1}^{-1}}{Z_N(\cB)}
\\ & = -\frac{\Delta\beta}{\beta_{n}\beta_{n+1}} \sqrt{\frac2\pi}\left((2N-4)\sqrt{\beta}+2\sqrt{\beta-\Delta\beta}
+2\sqrt{\beta+\Delta\beta}\right)^{-1}.
\end{split}
\]
Therefore
\[
 \frac{\partial}{\partial\Delta\beta}\, \frac{\partial g_n}{\partial \lambda} (0,0)=
\frac1N\sqrt{\frac{1}{2\pi\beta^5}}
\]
and the result is proven.
\end{proof}

\begin{rem}{\rm
It is possible to prove directly that at equilibrium, i.e. $\beta_0=\cdots=\beta_N=\beta>0$,
\[
\lim_{t\to+\infty}\frac1t\E\left(\left(J_n([0,t])\right)^2\right) = \frac1N\sqrt{\frac{2}{\pi\beta^5}}
=\frac{\partial^2 f_n}{\partial \lambda^2} \, (0,\cB_{\rm eq}),
\]
which shows that the formal exchange of limits in $t\to+\infty$ and in $\lambda\to0^-$ in the formula
\[
\frac{\partial^2}{\partial\lambda^2}\frac1t \, \log\E\left(\exp\left(-\lambda J_n([0,t])\right)\right)
\]
yields indeed a correct result. Using the Gallavotti-Cohen symmetry one can prove both Proposition
\ref{bern3} and the further equality
\[
\frac{\partial^2 g_n}{\partial \lambda^2} (0,0) =
-2\, \frac{\partial}{\partial \lambda}\, \frac{\partial g_n}{\partial\Delta\beta} (0,0)=
-2\, \frac{\partial}{\partial\Delta\beta}\, \frac{\partial g_n}{\partial \lambda} (0,0)
= \frac1N\sqrt{\frac{2}{\pi\beta^5}}.
\]
}
\end{rem}

\section{Confined tracers}\label{confined}

\subsection{Generalities and physical observables.}
In this section, we introduce a model which gives rise to a qualitatively different behavior for the self-consistent temperature profile.  The self-consistent temperature profile of the scatterers in the wandering tracers model was linear.  We will see that in the case of confined tracers, the temperature profile becomes non-linear.  A major difference between the two models is the dependence of the thermal conductivity on the set of temperatures of the scatterers.   For an arbitrary temperature distribution of the scatterers, we have seen that in the case of wandering tracers, the conductivity was identified with a frequency of collisions of a tracer with two neighboring scatterers.  As such and because the wandering tracer travels through the whole system, it was dependent on the temperature of every scatterer.  In the case of confined tracers, the conductivity is a purely local function of the set of temperatures.

\noindent The general structure of the process is again one of a Markov renewal process, the notations and proofs are strictly analogous to the case of wandering tracers.
In this model, the disposition of the scatterers is the same but there are exactly $N$ tracer particles locked in between the scatterers, including the ones on the boundaries. The $n$-th particle moves in between the scatterers, in the interval $I_n=[n-1,n]$, being reflected at the scatterers $w_{n-1}$ and $w_n$ with a random velocity $p$ distributed according to
\begin{equation}
\phi^{\pm}_{\beta_n}(p)=p^{\pm}\beta_n e^{-{\beta_n} \frac{p^2}{2}}.
\end{equation}
Because the particle is reflected, the sign in the distribution is the opposite of the sign of the incoming velocity.  Those models are  described by $N$ independent Markov renewal processes.  Each scatterer exchange energy with its two adjacent tracer particles and in order to express the self-consistency condition, we must  introduce notations to describe the motion of each tracer. We describe now the process describing the motion of the $n$-th particle traveling between scatterers $w_n$ and $w_{n+1}$.  The state space of the Markov chain is $E=\{-1,+1\}$, with transition
probability defined by $q_{1,-1}=q_{-1,1}=0$.

Let $(q_{n,0},p_{n,0})$ the initial data and velocity of the $n$-th particle.  We define $\sigma_{n,0}={\rm sign}(p_{n,0})$.
We consider now the Markov chain $(\sigma_{n,k})_{k\geq 0}$ in $E$ with initial
state $X_0=\sigma_{n,0}$. In fact, the Markov chain has a deterministic evolution $\sigma_{n,k}=(-1)^k\sigma_{n,0}$, $k\geq 0$.

For each $\sigma\in E$, we write $\hat\sigma=\hf(\sigma+1)$.
Then the time of the first collision with a scatterer is
\[
S_{n,0}=S_{n,0}(q_{n,0},p_{n,0}):=\frac{n+\hat\sigma_{n,0}-q_{n,0}}{p_{n,0}}>0,
\]
We now define the time the particle takes between two
subsequent visits to the scatterers.
Conditionally on the $\sigma$-algebra generated by $(\sigma_{n,k})_{k\geq 0}$
the sequence $(\tau_{n,k})_{k\geq 1}$
is independent with distribution defined by
\begin{equation}\label{taucon}
 \bbP(\tau_{n,k}\in d\tau \, | \, \sigma_{n,k-1})
= \frac{\beta_{n+\hat\sigma_{n,k-1}}}{\tau^3} \,
\exp\left(-\frac{\beta_{n+\hat\sigma_{n,k-1}}}{2\tau^2}\right)
\, \bo{(\tau>0)} \, d\tau,
\end{equation}
recall \eqref{psibeta}.
The time of the $k$-th collision with one of the two scatterers $w_n$ and $w_{n+1}$ is
\[
S_{n,k}:=S_{n,0}+\tau_{n,1}+\cdots+\tau_{n,k}, \qquad k\geq 1.
\]
Before time $S_{n,0}$, the particle moves with uniform velocity
$p_{n,0}$. Between time $S_{n,k-1}$ and time $S_{n,k}$, the particle
moves with uniform velocity $\frac{\sigma_{n,k}}{\tau_{n,k}}$ and
$(S_{n,k})_{k\geq 0}$ is the sequence of times when $q_{n,t}\in
\{n,n+1\}$. In particular we define the sequence of incoming velocities
$v_{n,k}$ at time $S_{n,k}$
\begin{equation}\label{vcon}
v_{n,0}:=p_{n,0}, \qquad v_{n,k}:=\frac{\sigma_{n,k}}{\tau_{n,k}}, \quad k\geq 1.
\end{equation}
We define the stochastic process $(q_{n,t},p_{n,t})_{t\geq 0}$
with values in $[n,n+1]\times \R^*$
\begin{equation}\label{qpcon}
(q_t,p_t) :=
\left\{
\begin{array}{ll}
(q_{n,0}+p_{n,0}t,p_{n,0}) \qquad {\rm if} \quad  t<S_{n,0},
\\ \\
\left( n+\hat\sigma_{n,k-1}
+\frac{\sigma_{n,k}}{\tau_{n,k}}(t-S_{n,k-1}),
\frac {\sigma_{n,k}}{\tau_{n,k}}
\right)  \ \ {\rm if} \ \
S_{n,k-1}\leq t< S_{n,k}, \ k\geq 1,
\end{array}
\right.
\end{equation}
Then, in analogy with Propositions \ref{markov}, \ref{invmeas}, \ref{markovren} and \ref{markovren2},
we have the result,

{\prop{The process $((q_{n,t},p_{n,t})_{t>0})_{0\leq n\leq N-1}$ is Markov and its only invariant measure is given by
\begin{equation}
\mu(\un p,\un q)=\frac1{\hat Z_N}\prod_{n=0}^{N-1}\bo{I_n}(q_n)
\left[\bo{(p_n>0)} \, \beta_{n} \, e^{-\beta_{n}\frac{p^2_n}{2}}+
\bo{(p_n<0)} \, \beta_{n+1}\, e^{-\beta_{n+1}\frac{p^2_n}{2}}\right]
\label{selfmu2}
\end{equation}
where $\beta_0=\beta_L$ and $\beta_{N}=\beta_R$ and ${Z_N}$ is the
normalization constant,
\begin{equation}
{\hat Z_N}:=\prod_{n=0}^{N-1}Z_n,
\qquad Z_n:=\left(
\frac{\pi\beta_{n}}{2}\right)^\hf+
\left(\frac{\pi\beta_{n+1}}{2}\right)^\hf.
\end{equation}
}}
\noindent We next identify the physical quantities of interest. The energy
exchanged between the scatterer $n$ and its two neighboring particles
during a time interval $[0,t]$ is given by
\begin{eqnarray*}
E_{n}([0,t]):=&&\hf\sum_{k\geq 0:\, S_{n,k}\leq t}\left(v^2_{n,k+1}-v^2_{n,k}\right)\bo{(\hat\sigma_{n,k}=0)}\nonumber\\&+&\hf\sum_{k\geq 0:\, S_{n-1,k}\leq t}\left(v^2_{n-1,k+1}-v^2_{n-1,k}\right)
\bo{(\hat\sigma_{n-1,k}=1)}\nonumber,
\end{eqnarray*}
recall that, by \eqref{vcon} and \eqref{qpcon}, $v_{n,k}$ and $v_{n,k+1}$
are respectively the incoming
and the outcoming velocity of the $n$-th particle at time $S_{n,k}$.
The energy exchanged between scatterers $n$ and $(n+1)$
during a time interval $[0,t]$ is given by
\[
J_{n\to n+1}([0,t]):=\hf\sum_{k\geq 1:\, S_{n,k}\leq t}
v^2_{n,k} \, \sigma_{n,k}
\]
The total entropy flow $S_n([0,t])$ and $S([0,t])$
due to the exchange of energy between the
scatterers and a particle can be defined as in \eqref{entropyflow}.
The energy flow per unit time $\cE_n$ in the stationary state,
the entropy flow per unit time $\cS_n$ and $\cS$,
and the average current of energy per unit time $\cJ_n$ between $w_n$
and $w_{n+1}$, can be defined as in, respectively, \eqref{en}, \eqref{sn}
and \eqref{jn}.

As in the case of wandering tracers (Proposition \ref{limit}), we may study the above limits defining the physical properties of the model.  As compared to Proposition \ref{limit}, the main difference resides in the expression of  the energy exchanged $\cE_n$ with the system.  This is the origin of the difference of shapes of the temperature profiles of the two models.

\begin{prop} 
For all $n=1,\ldots,N-1$,
\begin{equation}\label{cE2}
\cE_n= \frac{T_n-T_{n-1}}{Z_{n-1}}+\frac{T_n-T_{n+1}}{Z_{n}}\quad{\rm and}\quad
\cE_0=\frac{T_0-T_1}{Z_0},\quad \cE_N=\frac{T_N-T_{N-1}}{Z_{N-1}},
\end{equation}
\begin{equation}\label{cJ2}
\cJ_n= \frac{T_n-T_{n+1}}{Z_n},  \qquad \cS=
\sum_{n=0}^{N-1}\frac{(T_n-T_{n+1})^2}{Z_nT_nT_{n+1}}\geq 0.
\end{equation}
\end{prop}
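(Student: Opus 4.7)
\noindent\emph{Proof plan.} The plan is to follow closely the proof of Proposition \ref{limit}, exploiting the key simplification that in the confined model the $N$ tracers evolve independently, so each particle can be analysed separately on its own interval. For the $n$-th particle the underlying Markov chain $(\sigma_{n,k})_{k\geq 0}$ is deterministic with period $2$, which plays the role of the $2N$-periodicity \eqref{period} used in the wandering case; in particular the double-collision increments $(S_{n,2(k+1)}-S_{n,2k})_{k\geq 0}$ form an i.i.d.\ sequence.

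First I would establish the asymptotic collision frequencies. Using \eqref{psibeta}, a direct computation gives $\int_0^{+\infty}\tau\,\psi_{\beta_j}(d\tau)=\sqrt{\pi\beta_j/2}$, so the mean of one double-collision increment is $\sqrt{\pi\beta_n/2}+\sqrt{\pi\beta_{n+1}/2}=Z_n$. Blackwell's renewal theorem, applied exactly as in Proposition \ref{N_t}, then yields that the number of hits of particle $n$ per unit time at either $w_n$ or $w_{n+1}$ tends $\bbP_{(q_0,p_0)}$-a.s.\ to $1/Z_n$.

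Next I would apply the strong law of large numbers to the i.i.d.\ sequence of energy exchanges occurring over one period. A short calculation (via the change of variable $v=1/\tau$) gives $\int_0^{+\infty}(2\tau^2)^{-1}\psi_{\beta_j}(d\tau)=T_j$, so each incoming or outgoing kinetic energy at scatterer $w_j$ has average $T_j$. From this, $\cJ_n=(T_n-T_{n+1})/Z_n$ follows directly from the definition of $J_{n\to n+1}([0,t])$. For $\cE_n$ with $1\le n\le N-1$, scatterer $w_n$ receives two \emph{independent} contributions: from particle $n-1$ (rate $1/Z_{n-1}$, average energy change $T_n-T_{n-1}$) and from particle $n$ (rate $1/Z_n$, average change $T_n-T_{n+1}$); summing yields \eqref{cE2}. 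The boundary cases $\cE_0,\cE_N$ are identical but with only one contributing particle.

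Finally, the formula for $\cS$ is obtained by inserting \eqref{cE2} into $\cS=-\sum_{n=0}^N\cE_n/T_n$ and reorganising by summation by parts exactly as at the end of the proof of Proposition \ref{limit}: the coefficient of $1/Z_n$ becomes $(T_n-T_{n+1})(T_{n+1}^{-1}-T_n^{-1})=(T_n-T_{n+1})^2/(T_nT_{n+1})$, which is manifestly non-negative. I do not foresee any genuine obstacle; the only minor technical point, as in the wandering case, is to control the $O(1)$ edge-of-interval remainder between $E_n([0,t])$ and its i.i.d.-summable approximation $\hat E_n([0,t])$, which is negligible once divided by $t$.
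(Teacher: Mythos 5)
Your proposal is correct and is essentially the argument the paper intends: the paper itself omits the proof, stating only that it is ``completely analogous'' to that of Proposition \ref{limit}, and your write-up executes exactly that analogy (period-$2$ deterministic chain replacing the $2N$-periodicity, Blackwell/renewal theorem giving the collision rate $1/Z_n$ via $\int\tau\,\psi_{\beta_j}(d\tau)=\sqrt{\pi\beta_j/2}$, law of large numbers for the per-period energy exchanges with mean kinetic energy $T_j$, and the same summation by parts for $\cS$). All the computations you indicate check out, including the independence of the two contributions to $\cE_n$ from particles $n-1$ and $n$.
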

\noindent The proof is completely analogous to that of Proposition
(\ref{limit}) and we do not repeat it.
The main feature of the proof is again that by using the renewal theorem, the conductivity
\[
\kappa_n=\frac{\cJ_n}{T_n-T_{n+1}}=\frac{1}{Z_n}
\]
appears as a frequency of collision of the tracer with the walls of the box
to which it is confined.

\subsection{Self-consistency condition, temperature profile and Fourier's law.}
We now derive the consequence of the self-consistency condition $\cE_n=0$, $n=1,\ldots,N-1$ on the shape of the temperature profile.

\noindent We set
\[
g_N(x):=\sum_{i=0}^{N-1} \bo{\left[\frac i{N},\frac {i+1}{N}\right[}(x) \, N(T_{i+1}-T_i)
\]
and
\[
h_N(x):=T_L+\int_0^x g_N(t)\, dt.
\]
Notice that $h_N(i/N)=T_i$ and that $h_N$ linearly interpolates between these values.
{\prop{ {\bf (Self-consistency condition)}
The only collection $(T_n)_{n=0,\ldots,N}$
such that
$$
 \cE_n=0,\; n=1,\ldots,N-1
$$
with $T_0=T_L$ and $T_N=T_R$, is the solution of
\begin{equation}
\left (\frac{(T_n-T_{n+1})}{(T_n)^{-\hf}+(T_{n+1})^{-\hf}}+\frac{(T_n-T_{n-1})}{(T_n)^{-\hf}+(T_{n-1})^{-\hf}}\right )=0,
\qquad 1\leq n\leq N-1.
\label{selfT}
\end{equation}
In this case, when $N\rightarrow+\infty$, $h_N$
converges uniformly to the function
\begin{equation}
h(x):= \left( T_L^{\frac{3}{2}}+
x(T_R^{\frac{3}{2}}-T_L^{\frac{3}{2}})\right)^\frac{2}{3}, \qquad x\in[0,1],
\label{solution}
\end{equation}
unique solution of the
equation
\[
\left\{ \begin{array}{ll} \left(h^\hf \, h'\right)'=0, \quad x\in]0,1[,
\\ \\
h(0)=T_L,\ h(1)=T_R
\end{array}\right.
\]
}}
\noindent {\it Proof.}
We first note that (\ref{selfT}) implies that $h_N$ is a solution of the elliptic equation
\[
\int_0^1 a_N \, h_N'\, \varphi' \, dx=0, \qquad \forall \, \varphi\in C^\infty_c(0,1),
\]
where
\[
\begin{split}
a_N (x) & := \sum_{i=0}^{N-1} \bo{\left[\frac i{N},\frac {i+1}{N}\right[}(x)
\ \frac1{T_i^{-\hf}+T_{i+1}^{-\hf}} \
\frac1{(h_N(\lfloor xN\rfloor/N))^{-\hf}+(h_N(\lceil xN\rceil/N))^{-\hf}}.
\end{split}
\]
This can be seen by writing,
\begin{eqnarray}
\int_0^1 a_N \, h_N'\, \varphi' \, dx&=&\sum_{i=0}^{N-1}N\frac{T_{i+1}-T_i}{T_i^{-\hf}+T_{i+1}^{-\hf}}\int_{\frac{i}{N}}^{\frac{i+1}{N}}\varphi'(x)\,dx\nonumber \\ &=&\sum_{i=0}^{N-1}N\frac{T_{i+1}-T_i}{T_i^{-\hf}+T_{i+1}^{-\hf}}\left(\varphi\left(\frac{i+1}{N}\right)-\varphi\left(\frac{i}{N}\right)\right),
\end{eqnarray}
summing by parts this last equation and using (\ref{selfT}).
We are going to show below  that the sequence of functions $g_N$ is
bounded in $L^2(0,1)$.  As every subsequence is also bounded, one can
extract from every subsequence a  subsubsequence, weakly converging to
some $g\in L^2(0,1)$.
Correspondingly, from every subsequence of $h_N$, one can thus extract
a subsubsequence converging uniformly (in $x$).   Since the convergence
is uniform, every limit function $h$ obtained in this way satisfies
\[
0=\int_0^1 a_N \, h_N'\, \varphi' \, dx \to \int_0^1 2\, h^{\hf} \, h'\, \varphi' \, dx,
\]
i.e.
\[
\left( h^{\hf} \, h' \right)' = 0.
\]
with boundary conditions $h(0)=T_L$, $h(1)=T_R$.  This equation admits a unique solution given by (\ref{solution}).  This implies that all the subsubsequences constructed above converge to (\ref{solution}) and thus that the full sequence $h_N$ converges to it as well.
We show now that the sequence $g_N$ is bounded in $L^2(0,1)$ norm.
Multiplying \eqref{selfT} by $T_n$ and summing by parts, we obtain
\[
\sum_{n=1}^{N-2} \frac{(T_n-T_{n+1})^2}{T_n^{-\hf}+T_{n+1}^{-\hf}} =
T_{N}\frac{T_{N}-T_{N-1}}{T_{N}^{-\hf}+T_{N-1}^{-\hf}} - T_1\frac{T_1-T_0}{T_1^{-\hf}+T_0^{-\hf}}
\]
Let us suppose that $T_R=T_{N+1}\geq T_0=T_L$. Then, because (\ref{selfT}) implies that the sign of the sequence $(T_{n+1}-T_n)$ is constant, we have $T_{n+1}\geq T_n$ for all $n=0,\ldots, N-1$,
and in particular $T_0\leq T_n\leq T_N$. One also obtains that
\[
T_{n+1}-T_n = \frac{T_n^{-\hf}+T_{n+1}^{-\hf}}{T_n^{-\hf}+T_{n-1}^{-\hf}}\, (T_n-T_{n-1}) \leq T_n-T_{n-1},
\]
so that the function $T$ is increasing and concave. In particular,
\[
(N-2)(T_{N}-T_{N-1})^2 \leq \sum_{n=1}^{N-2} \frac{(T_n-T_{n+1})^2}{T_n^{-\hf}+T_{n+1}^{-\hf}} \leq
T_{N-1}\frac{T_{N}-T_{N-1}}{T_{N}^{-\hf}+T_{N-1}^{-\hf}} \leq T_L^{\frac32} (T_{N}-T_{N-1}),
\]
and therefore $T_{N}-T_{N-1}\leq T_L^{\frac32}/(N-2)$. In particular,
\begin{equation}\label{eest}
N\sum_{n=1}^{N-2} (T_n-T_{n+1})^2 \leq \frac1{2T_R^{\frac32}}
\sum_{n=1}^{N-2} \frac{(T_n-T_{n+1})^2}{T_n^{-\hf}+T_{n+1}^{-\hf}} \leq
\left(\frac{T_L}{T_R}\right)^\frac32
\end{equation}
By \eqref{eest} we have that
\[
\int_0^1 g_N^2 \, dx = N\sum_{n=0}^{N} (T_n-T_{n+1})^2 \leq C
\]
which shows that the sequence $g_N$ is  bounded in $L^2(0,1)$.
\penalty-20\null\hfill$\square$\par\medbreak
\noindent  We define the local thermal conductivity by
\[
\kappa(x)=\lim_{N\to\infty}\kappa_{\lfloor N x\rfloor}
\]
for $x\in[0,1]$.
Its numerical value is easy to obtain,
\[
\lim_{N\to\infty}\kappa_{\lfloor N x\rfloor}=\lim_{N\to\infty}Z^{-1}_{\lfloor N x\rfloor}=
\lim_{N\to\infty}\left[\left(\frac{\pi\beta_{\lfloor N x\rfloor}}{2}\right)^\hf+
\left(\frac{\pi\beta_{\lfloor N x\rfloor+1}}{2}\right)^\hf\right]^{-1}=\sqrt{\frac{h(x)}{2\pi}},
\]
where $h$ is the function (\ref{solution}).
Unlike \eqref{GK0}, in this case the local conductivity has
a spatial dependence.

\subsection{Cumulant generating function and
the Gallavotti-Cohen symmetry relation}\label{confinedGC}
We can repeat the analysis of the cumulant generating function of subsection \ref{wanderingGC} in the context of the confined tracers.   The proofs are actually simpler and the results translate word by word.

\section{Conclusions and prospects.}
In the tracers-scatterers models introduced in this paper, we were able to show the validity of  Fourier's law and interpret the thermal conductivity as the collision frequency between tracers and scatterers.  This comes naturally as a consequence of the renewal theorem for Markov renewal processes.  We have recovered the two types of temperature profiles observed in deterministic systems described by local collisional dynamics. We were also able to study in details the cumulant generating function of the time-integrated current, showing in particular its lack of analyticity.  We have provided a formula allowing the computation of cumulants of any order in and out of equilibrium. In particular we have shown the validity of the Green-Kubo formula in our models.

Natural problems to study in future works are the large deviations properties and convergence to the invariant measure of those models.  They may be also further extended and studied in different interesting ways.  One possible extension is to consider cases where the tracers are transmitted or reflected according to some non-trivial probability ditribution.  In that case, the Markov chain associated to the Markov renewal process becomes non-deterministic.  Although we have provided an explicit form for the invariant measure covering that case too, it would be interesting to study in details the dynamical properties of those systems.  In particular, it appears that breaking the right-left symmetry in the transmission-reflection rules induces a modification of the stationary current which enters into competition with the current driven by the temperature gradient imposed in the system.  
It could be also interesting  to consider models, as in \cite{ColletEckmann}, where the velocity of the tracer remains unaffected by the scatterers with some positive probability.

\noindent{\bf Acknowledgments.} R.L. is supported by the French ANR network Limites Hydrodynamiques et M\'ecanique Statistique Hors Equilibre.

\end{document}